\newtheorem{theorem}{Theorem}[section]
\newtheorem{lemma}[theorem]{Lemma}
\newtheorem{assumption}[theorem]{Assumption}
\theoremstyle{definition}
\title{A Wide-Sense Stationarity Test Based on the Geometric Structure of Covariance}
\author{
  Yinbu Wang$^{1}$\thanks{Email: wamgyb@mail.nwpu.edu.cn} \and
  Yong Xu$^{1}$\thanks{Email: hsux3@nwpu.edu.cn}
}
\date{%
  $^{1}$School of Mathematics and Statistics, Northwestern Polytechnical University, Xi'an, China
}
\begin{document}
\maketitle

\begin{abstract}
This paper presents a test for wide-sense stationarity based on the geometry of the covariance function. We estimate local patches of the covariance surface and then check whether the directional derivative in the $(1,1,0)$ direction is zero on each patch. The method only requires the covariance function to be locally smooth and does not assume stationarity in advance. It can be applied to general stochastic dynamical systems and provides a time-resolved view. We apply the test method to an SDOF system and to a Duffing oscillator. These examples show that the method is numerically stable and can detect departures from stationarity in practice. 
\newline
MSC2020 Classification: 62G10, 62M10, 37M10, 60G12
\end{abstract}
\noindent\textbf{Keywords:}wide-sense stationarity;cylindrical surface;local polynomial regression;stochastic dynamical systems.

\section{Introduction}
Wide-sense stationarity (WSS) is a commonly used assumption in time series and stochastic process studies, and it serves as the basis for many frequency domain methods and statistical inference (\cite{brockwell2009time,liang2015random}). In engineering signals, random vibration, and stochastic dynamical systems, the response is often governed by nonlinear dynamics and driven by non-stationary external excitation, and system parameters may also vary over time. Therefore, to justify frequency domain analysis, it is necessary to evaluate whether the process is WSS and, if not, to locate the time intervals over which a WSS approximation remains reasonable(\cite{lin2004probabilistic,sun2006stochastic,forgoston2018primer}).
\newline
For non-stationary processes, existing work follows two lines. The first is the locally stationary model, which treats non-stationarity as a slow time variation and builds tests and measures based on this idea(\cite{priestley1965evolutionary,dahlhaus1997fitting,paparoditis2009testing,dette2011measure,nason2013test,aue2020testing,van2021nonparametric}). The second is the piecewise stationary model, which views non-stationarity as changes at a few time points or as piecewise changes, and develops a range of tests and estimation methods(\cite{page1954continuous,hinkley1971inference,andrews1993tests,yao1987approximating,bai1998estimating,einmahl2003empirical}); it can also be extended to handle cases where slow variation and sudden shifts occur together(\cite{last2008detecting,casini2024change}). A shared problem is that one usually needs to decide whether the main source of non-stationarity is slow variation or sudden shifts. If this choice is wrong, it can introduce bias and may even lead to misleading conclusions (\cite{casini2024change}). This issue is more common in stochastic dynamical systems: Noise can cause sudden switches between metastable states, so the structure may change abruptly rather than slowly \cite{forgoston2018primer}.On the other hand, putting an SDE directly into a change-point framework often requires a more explicit model setup, so it may be less reliable for complex systems.
\newline
This work builds on a fact from differential geometry \cite{do2016differential}: if a stochastic process is WSS, its covariance surface has a cylindrical structure, with generator direction $(1,1,0)$. We therefore assess WSS by testing the local change of the covariance surface along this direction. Using local polynomial regression \cite{fan2018local,masry1996multivariate,de2013derivative}, we estimate the required derivatives and test whether the directional derivative along $(1,1,0)$ is zero. The proposed method does not rely on local spectra or a specific parametric time series model. It only requires the covariance surface to be continuous and differentiable. When model information is limited, or the system is complex, this geometric approach provides a direct and easy to interpret test. The framework is especially suitable for stochastic differential equations and stochastic dynamical systems. On the one hand, under common conditions, solutions of SDEs have well-behaved properties. On the other hand, it is common to obtain multiple independent sample paths through numerical simulation or repeated experiments, which helps stabilize covariance estimation and support local diagnostics.
\newline
Section 2 describes the geometric structure of the covariance function for WSS processes and introduces local cylindrification as the key step for constructing the WSS test. Section 3 builds the test statistic via local polynomial regression and establishes its properties. Section 4 evaluates the method on two stochastic dynamical SDOF oscillators and a stochastic Duffing oscillator and further compares its performance with the stationarity measure of \cite{dette2011measure}.

\section{Geometric Shape of the Covariance Function}
Let $X_t$ be a stochastic process on an open interval $T$ with finite second moments, and let $r(s,t)$ be its covariance function
\begin{equation}
    r(s,t)=E(X_tX_s)-E(X_s)E(X_t).
\end{equation}
We say that $X_t$ is wide-sense stationary on $T$ if there exists a positive definite function $h:T \to R$ such that
\begin{equation}
r(s,t)=h(s-t),\qquad E(X_t)=m,
\end{equation}
where $m$ is a constant. The following lemma gives an equivalent form for the covariance part. From now on, we only consider the case $E(X_t)=0$.
\begin{lemma}\label{lemma1}
The process $X_t$ is WSS on a connected open set $T$ with zero mean if and only if its covariance $r(s,t)\in C^1(T\times T)$ and
\begin{equation}
  r_s + r_t = 0.
\end{equation}
Here
\begin{equation}
    \begin{split}
        &r_s=\frac{\partial r(s,t)}{\partial s},\\
        &r_t=\frac{\partial r(s,t)}{\partial t}.
    \end{split}
\end{equation}
\end{lemma}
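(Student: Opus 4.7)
The plan is to split the lemma into its two implications and handle each directly. For the forward direction, assume $X_t$ is zero-mean WSS, so $r(s,t)=h(s-t)$ for some positive definite $h$. Given $r\in C^1(T\times T)$, the dependence of $r$ on $(s,t)$ only through $s-t$ forces $h\in C^1$ on the difference set, and the chain rule immediately gives $r_s=h'(s-t)$ and $r_t=-h'(s-t)$, whence $r_s+r_t=0$. This direction is essentially a one-line calculation once the $C^1$ regularity is granted.

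For the converse, assume $r\in C^1(T\times T)$ and $r_s+r_t=0$ on $T\times T$. I would recognize this as a first-order linear PDE whose characteristic vector field is $(1,1)$, along which the quantity $s-t$ is conserved. Concretely, fix any two points $(s_0,t_0),(s_1,t_1)\in T\times T$ with $s_0-t_0=s_1-t_1$, and consider the straight-line path $\gamma(\tau)=(s_0+\tau,\,t_0+\tau)$ for $\tau\in[0,s_1-s_0]$. Since $T$ is a connected open subset of $\mathbb{R}$, hence an interval, this segment lies in $T\times T$. By the chain rule,
\begin{equation}
\frac{d}{d\tau}\,r(\gamma(\tau))=r_s(\gamma(\tau))+r_t(\gamma(\tau))=0,
\end{equation}
so $r(s_1,t_1)=r(s_0,t_0)$. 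Hence $r$ depends on $(s,t)$ only through $u=s-t$, and we may define $h(u):=r(s,t)$ for any representative with $s-t=u$; this $h$ is automatically well defined, continuous, and positive definite because it inherits these properties from the covariance $r$. Combined with the zero-mean hypothesis, this yields WSS.

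The only delicate point, and likely the main obstacle, is justifying that every pair of points with the same value of $s-t$ can actually be joined inside $T\times T$ by a characteristic segment; this is what requires $T$ to be connected (an interval on $\mathbb{R}$), so that the diagonal translations $(s,t)\mapsto(s+\tau,t+\tau)$ stay inside the domain whenever both endpoints do. Once that geometric observation is pinned down, the rest of the argument reduces to the chain rule and the defining property of WSS.
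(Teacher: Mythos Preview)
Your proposal is correct and follows essentially the same approach as the paper. The paper's converse argument introduces the change of variables $u=s-t$, $v=t$ and shows $\partial_v\bigl(r(u+v,v)\bigr)=r_s+r_t=0$, which is just a coordinate-based repackaging of your characteristic-line integration; in both cases the content is that $r$ is constant along the $(1,1)$ direction, and your explicit remark about $T$ being an interval (so the characteristic segments stay in $T\times T$) is a point the paper leaves implicit.
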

\begin{proof}
If $X_t$ is WSS on $T$, then $r(s,t)=h(s-t)$ for some $h$. Hence
\begin{equation}
\begin{split}
    r_s+r_t&=\frac{\partial h}{\partial (s-t)}\frac{\partial (s-t)}{\partial s}+\frac{\partial h}{\partial (s-t)}\frac{\partial (s-t)}
    {\partial t}\\
    &=\frac{\partial h}{\partial (s-t)}\Bigl(\frac{\partial (s-t)}{\partial s}+\frac{\partial (s-t)}
    {\partial t}\Bigr)\\
    &=0.
\end{split}
\end{equation}
Assume $r\in C^1(T\times T)$ and $r_s + r_t = 0$ everywhere. Let
\begin{equation}
    u = s - t,\quad v = t,
\end{equation}
and define
\begin{equation}
h(u,v) = r(u + v,v). 
\end{equation}
By the chain rule,
\begin{equation}
\begin{split}
        \frac{\partial h(u,v)}{\partial v}&=\frac{\partial r(s(u,v),t(u,v))}{\partial s}\frac{\partial s}{\partial v}+\frac{\partial r(s(u,v),t(u,v))}{\partial t}\frac{\partial t}{\partial v}\\
        &=\frac{\partial r(s(u,v),t(u,v))}{\partial s}+\frac{\partial r(s(u,v),t(u,v))}{\partial t}\\
        &=0.
\end{split}
\end{equation}
Thus $h$ does not depend on $v$. Write $h(u,v)=h(u)$, and we obtain $r(s,t)=h(s-t)$, which is exactly the form of a WSS process.
\end{proof}
Lemma~\ref {lemma1} can be regarded as a partial differential equation description of WSS. The idea is not new. In what follows, we extend it to the surface given by the covariance function and obtain a geometric interpretation of WSS.

\subsection{Geometric Shape}
We first see how WSS affects the surface generated by $r(s,t)$. The following argument relies on some standard results in differential geometry.
\newline
For a smooth function $f(x,y)$ the Gaussian curvature at any point is
\begin{equation}
    K=\frac{f_{xx}f_{yy}-f_{xy}^2}{\bigl(1+f_x^{\,2}+f_y^{\,2}\bigr)^{2}}.
\end{equation}
Consider the surface $M=\left\{x=(s,t,r(s,t))|s,t\in T\right\}$ which is simply the graph of $r(s,t)$. Based on classical results in differential geometry \cite{do2016differential}, we have the following lemma. Here we assume $r(s,t) \in C^2(T \times T)$ so that the Gaussian curvature is well defined.
\begin{lemma}\label{lemma2}
    If $X_t$ is WSS, then $M$ is a cylindrical surface, and its ruling direction is $(1,1,0)$.
\end{lemma}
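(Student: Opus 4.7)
The plan is to convert the WSS assumption, via Lemma~\ref{lemma1}, into the explicit representation $r(s,t)=h(s-t)$ and then read the cylindrical structure directly off this formula. Concretely, $M$ being a cylinder with ruling direction $\vec d=(1,1,0)$ will mean two things that I want to verify: (i) at every point of $M$ the vector $(1,1,0)$ lies in the tangent plane, and (ii) the entire straight line through any point of $M$ in that direction remains on $M$. Verifying (ii) is the substance of the claim; (i) is a consistency check that makes the geometric statement meaningful.

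For (i), the tangent plane at $(s,t,r(s,t))$ is spanned by $(1,0,r_s)$ and $(0,1,r_t)$, so their sum is $(1,1,r_s+r_t)$, which equals $(1,1,0)$ exactly because Lemma~\ref{lemma1} gives $r_s+r_t=0$. For (ii), I would invoke the form $r(s,t)=h(s-t)$. Given $(s_0,t_0,r(s_0,t_0))\in M$ and any $\lambda$ small enough that $(s_0+\lambda,t_0+\lambda)\in T\times T$, I would compute
\[
r(s_0+\lambda,t_0+\lambda)=h\bigl((s_0+\lambda)-(t_0+\lambda)\bigr)=h(s_0-t_0)=r(s_0,t_0),
\]
which shows the line $\{(s_0+\lambda,t_0+\lambda,r(s_0,t_0))\}$ lies on $M$ as long as it stays in the admissible parameter domain. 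Equivalently, reparametrizing by $u=s-t$, $v=t$ writes $M=\{(u,0,h(u))+v(1,1,0)\}$, displaying $M$ as the union of translates of the base curve $u\mapsto(u,0,h(u))$ along $(1,1,0)$, which is the definition of a cylindrical surface with that ruling.

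As a sanity check I would also record that, from $r(s,t)=h(s-t)$, the second partials satisfy $r_{ss}=r_{tt}=-r_{st}=h''(s-t)$, so $r_{ss}r_{tt}-r_{st}^{2}\equiv0$ and hence $K\equiv0$ on $M$ by the formula quoted above. This is consistent with the classical fact that cylindrical surfaces are developable, although $K\equiv0$ alone would only give developability, not the specific cylindrical form with direction $(1,1,0)$. Honestly, the main obstacle here is almost entirely bookkeeping rather than conceptual: given Lemma~\ref{lemma1}, the content reduces to checking a tangency identity and a one-line invariance computation, so the care required is in stating precisely what ``cylindrical surface with ruling direction $(1,1,0)$'' means and confirming both conditions, rather than stopping at the curvature identity alone.
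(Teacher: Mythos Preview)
Your proposal is correct and follows essentially the same route as the paper: both arguments use the representation $r(s,t)=h(s-t)$ to show that the line $\lambda\mapsto(s_0+\lambda,t_0+\lambda,r(s_0,t_0))$ through any point of $M$ lies on $M$, and both record the identity $r_{ss}r_{tt}-r_{st}^{2}=0$ giving $K\equiv0$. If anything your logical organization is slightly tighter, since you correctly treat the curvature computation as a consistency check rather than a necessary step---once you have a family of parallel straight lines covering $M$, the cylindrical structure is already established by definition, whereas the paper routes through ``ruled $+$ developable $+$ parallel rulings'' to reach the same conclusion.
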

\begin{proof}
For a WSS process, we have
\begin{equation}
r(s,t)=h(s-t),
\end{equation}
which implies
\begin{equation}
  r_s+r_t = 0 \quad \text{for all }(s,t)\in T\times T.  
\end{equation}
Take any $(s_0,t_0)\in T\times T$ and consider the curve
\begin{equation}
\gamma(\lambda)=(s_0+\lambda,t_0+\lambda,r(s_0+\lambda,t_0+\lambda)).
\end{equation}
Then
\begin{equation}
    \begin{split}
        &\gamma'(\lambda)=(1,1,r_s+r_t)=(1,1,0),\\
        &\gamma''(\lambda)= 0.
    \end{split}
\end{equation}
Hence $\gamma$ is a straight line whose direction vector is $v=(1,1,0)$. If we let $(s_0,t_0)$ change, we obtain a family of parallel straight lines that cover $M$. Therefore, $M$ is a ruled surface whose rulings are all parallel to $v$. Let $f(s,t)=h(s-t)=h(l)$, then
\begin{equation}
f_s=h'(l),\quad f_t=-h'(l),\quad f_{ss}=f_{tt}=h''(l),\quad f_{st}=-h''(l).
\end{equation}  
The Gaussian curvature is
\begin{equation}
K=\frac{f_{ss}f_{tt}-f_{st}^{2}}{\bigl(1+f_s^{2}+f_t^{2}\bigr)^{2}}=\frac{h''(l)h''(l)-(-h''(l))^{2}}{(1+2h'(l)^{2})^{2}}=0.
\end{equation}
Thus $M$ is a developable ruled surface. Since all rulings are parallel to the same direction vector, the only possible case is that $M$ is a cylindrical surface with ruling direction $\mathbf v$.
\end{proof}
\begin{figure}[H]
    \centering
    \includegraphics[width=0.5\linewidth]{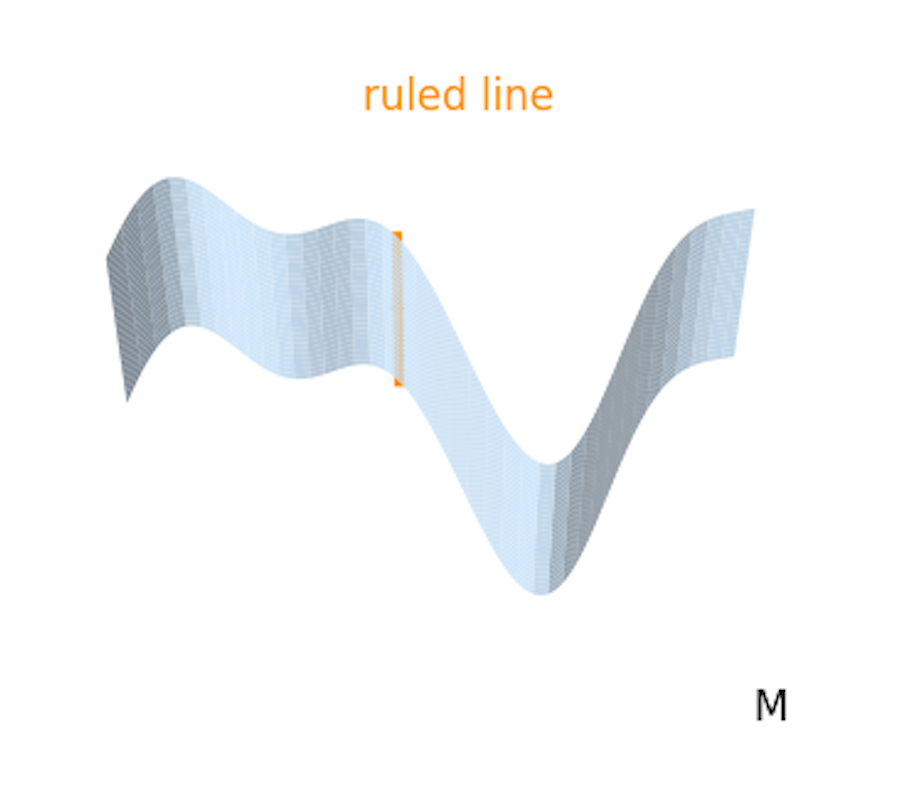}
    \caption{A surface $M$ generated by straight rulings with the same direction.}
    \label{M}
\end{figure}
Figure~\ref{M} shows such a surface. If the process is WSS, the ruling direction is $(1,1,0)$. We will use this property to build a test for WSS.

\subsection{Local Cylindrification}
Lemma~\ref{lemma2} shows that there is a direct link between WSS and the geometric structure of the covariance surface. Thus, testing WSS can be turned into checking whether the surface is a cylinder with ruling direction $(1,1,0)$. Interestingly, this detection method can also be applied locally. The following lemma provides a construction of this approach.
\begin{lemma}\label{lemma3.5}
    Assume $r\in C^1(T\times T)$, $T<\infty$, and $T\times T=\bigcup_{i=1}^{M}\Omega_i$ with $\Omega_i\cap \Omega_j=\emptyset$ when $i\neq j$. Each $\Omega_i$ is a convex domain and the mean of $r$ on $\Omega_i$ is
    \begin{equation}
        \bar r_{\Omega_i} = \frac{1}{|\Omega_i|}\int\int_{\Omega_i} r(s,t)dsdt.
    \end{equation}
    Define the local cylindrification approximation 
    \begin{equation}
        r^{\mathrm{cyl}}_h(s,t) = \sum_{i=1}^M r^{\mathrm{cyl}}_i\, \mathbf \chi_{\Omega_i},
    \end{equation}
    where 
    \begin{itemize}
        \item $\chi_{\Omega_i}$ is the indicator function,
        \item $\sup_{i\leq M} \text{diam}(\Omega_i)\leq h$,
        \item $r^{\mathrm{cyl}}_i$ is a cylindrical surface function.
    \end{itemize}
    Then there exists such an $r^{\mathrm{cyl}}_h(s,t)$ for which
    \begin{equation}
    \lim_{h\to 0}\|r-r^{\mathrm{cyl}}_h\|_{L^2(T\times T)}\to 0.
    \end{equation}
\end{lemma}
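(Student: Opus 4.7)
The plan is to realize $r^{\mathrm{cyl}}_h$ as the patchwise constant approximation $r^{\mathrm{cyl}}_i := \bar r_{\Omega_i}$. A constant $c$ can be written as $g(s-t)$ with $g\equiv c$, so it is a (degenerate) cylindrical surface function with ruling direction $(1,1,0)$; hence it qualifies as an admissible local piece. With this choice, $r^{\mathrm{cyl}}_h$ is essentially the conditional expectation of $r$ with respect to the $\sigma$-algebra generated by $\{\Omega_i\}$, and the lemma reduces to the classical $L^2$-density of piecewise constants over shrinking partitions.

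To make this rigorous, I would first use the hypothesis $T<\infty$ to note that $\overline{T\times T}$ is compact, so $r$, being $C^1$ (and hence, under the implicit mild regularity on the closure standard for covariance functions, continuous up to the boundary), is uniformly continuous on $\overline{T\times T}$. Given $\varepsilon>0$, choose $\delta$ from uniform continuity and take $h<\delta$. For any $(s,t)\in\Omega_i$, averaging the pointwise bound $|r(s,t)-r(s',t')|<\varepsilon$ over $(s',t')\in\Omega_i$ (legitimate since $\mathrm{diam}(\Omega_i)\leq h<\delta$) gives
\begin{equation*}
|r(s,t)-\bar r_{\Omega_i}| \;=\; \Bigl|\tfrac{1}{|\Omega_i|}\!\int\!\!\int_{\Omega_i}\!\!\bigl(r(s,t)-r(s',t')\bigr)\,ds'\,dt'\Bigr| \;<\; \varepsilon.
\end{equation*}
Squaring and summing over $i$,
\begin{equation*}
\|r - r^{\mathrm{cyl}}_h\|_{L^2(T\times T)}^2 \;=\; \sum_{i=1}^M\int\!\!\int_{\Omega_i}\!\!|r-\bar r_{\Omega_i}|^2\,ds\,dt \;\leq\; \varepsilon^2\,|T\times T|,
\end{equation*}
and letting $\varepsilon\downarrow 0$ via $h\downarrow 0$ yields the claim.

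The main obstacle is conceptual rather than technical: whether a constant is considered a genuine ``cylindrical surface function.'' If one insists on a non-trivial dependence on $s-t$, the same strategy still works after replacing the constant piece with a first-order $s-t$ Taylor expansion at a center $(s_0,t_0)\in\Omega_i$, namely
\begin{equation*}
r^{\mathrm{cyl}}_i(s,t) \;:=\; r(s_0,t_0) \;+\; \tfrac{1}{2}\bigl(r_s(s_0,t_0)-r_t(s_0,t_0)\bigr)\bigl((s-t)-(s_0-t_0)\bigr),
\end{equation*}
which depends only on $s-t$ and hence is genuinely cylindrical with ruling $(1,1,0)$; by the $C^1$-Taylor theorem it approximates $r$ uniformly on $\Omega_i$ with an error that is $o(1)$ as $h\to 0$, which again yields the desired $L^2$ convergence. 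A secondary technical point is the need for uniform continuity on the closure; if one prefers to avoid any hypothesis on the boundary behaviour of $r$, the argument can instead be localised to compact subsets $K\Subset T\times T$ and a dominated-convergence/exhaustion argument used, but this would only change the bookkeeping.
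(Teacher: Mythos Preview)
Your proposal is correct and shares with the paper the decisive step: taking the patchwise constant $r^{\mathrm{cyl}}_i=\bar r_{\Omega_i}$ and observing that constants are (degenerate) cylindrical surface functions. The difference lies in the error estimate. The paper applies the Poincar\'e inequality on each convex $\Omega_i$,
\[
\|r-\bar r_{\Omega_i}\|_{L^2(\Omega_i)}\le C\,\mathrm{diam}(\Omega_i)\,\|\nabla r\|_{L^2(\Omega_i)}\le Ch\,k\,|\Omega_i|^{1/2},
\]
with $k=\sup_{T\times T}|\nabla r|$, and then sums to obtain the explicit rate $\|r-r^{\mathrm{cyl}}_h\|_{L^2}\le Ckh\,|T\times T|^{1/2}$; thus the $C^1$ hypothesis is used quantitatively through $\nabla r$. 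Your argument instead uses only uniform continuity of $r$, which is more elementary (it would work even for $r\in C^0$) but yields no rate. Your caveat about boundary behaviour is well taken; note that the paper's proof has the same implicit assumption, since it writes $k=\sup_{T\times T}|\nabla r|$ without justifying finiteness on the open domain. Your fallback construction with the $(s-t)$-directional first-order term is a nice touch and anticipates the spirit of the paper's next lemma, where a full first-order Taylor plane is used on each patch to obtain the sharper $O(h^2)$ rate.
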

\begin{proof}
    First take $r^{\mathrm{cyl}}_i=\bar r_{\Omega_i}$. By the Poincaré inequality, we have in each $\Omega_i$
    \begin{equation}
    \begin{split}
 \|r-r^{\mathrm{cyl}}_h\|_{L^2(\Omega_i)}&=     \|r-\bar r_{\Omega_i}\|_{L^2(\Omega_i)}\\
 &\leq C \text{diam}(\Omega_i) \|\nabla r\|_{L^2(\Omega_i)}\\
 &\leq Ch \|\nabla r\|_{L^2(\Omega_i)}.
    \end{split}
    \end{equation}
    Since $ r \in C^1(\Omega_i)$ and $ \int_{\Omega_i} (r(x)-\bar r_{\Omega_i})dx = 0$, we can apply the result of \cite{bebendorf2003note}. Let $k=\sup_{x\in T\times T}|\nabla r(x)|$. Then
    \begin{equation}
    \begin{split}
        \|\nabla r\|_{L^2(\Omega_i)}&=(\int_{\Omega_i}|\nabla r(x)|^2dx)^{\frac{1}{2}}\\
        &\leq (\int_{\Omega_i}|k|^2dx)^{\frac{1}{2}}\\
        &\leq (|k|^2|\Omega_i|)^{\frac{1}{2}}.
    \end{split}
    \end{equation}
    Hence
    \begin{equation}
        \|r-r^{\mathrm{cyl}}_h\|_{L^2(\Omega_i)}\leq  Chk|\Omega_i|^{\frac{1}{2}}.
    \end{equation}
    Constant functions are included in the class of cylindrical surface functions. Thus, there exists a family of cylindrical surface functions such that on each $\Omega_i$ we have
    \begin{equation}
         \|r-r^{\mathrm{cyl}}_h\|_{L^2(\Omega_i)}\leq  Chk|\Omega_i|^{\frac{1}{2}}.
    \end{equation}
    We now show that the above relation holds on the whole domain. Since $\Omega_i$ and $\Omega_j$ do not overlap we obtain
    \begin{equation}
    \begin{split}
        \|r-r^{\mathrm{cyl}}_h\|^2_{L^2(T\times T)}&=\sum_{i=1}^M\|r-r^{\mathrm{cyl}}_h\|^2_{L^2(\Omega_i)}\\
        &\leq C^2k^2 h^2\sum_{i=1}^M|\Omega_i|\\
        &=C^2k^2h^2|T\times T|.
    \end{split}
    \end{equation}
    Therefore
    \begin{equation}
        \lim_{h\to 0}\|r-r^{\mathrm{cyl}}_h\|_{L^2(T\times T)}\leq \lim_{h\to 0}\ C^2k^2h^2|T\times T|\to 0.
    \end{equation}
\end{proof}
\begin{figure}[H]
    \centering
    \includegraphics[width=0.5\linewidth]{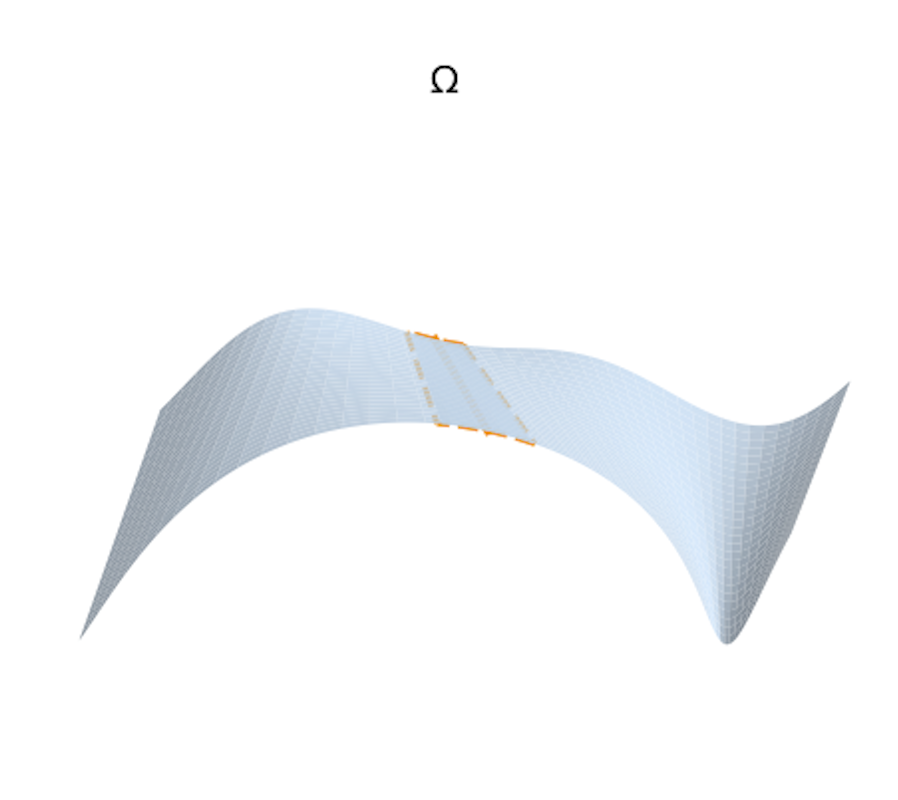}
    \caption{Local cylindrification on a surface patch $\Omega$ with parallel rulings and a unique ruling direction.}
    \label{Local cylindrification}
\end{figure}
The lemma above shows that for the covariance surface $r$ we can build a locally cylindrical approximation $r^{\mathrm{cyl}}$ such that $r^{\mathrm{cyl}}$ converges to $r$ in $L^{2}$. As in Figure~\ref{Local cylindrification}, any surface that satisfies the assumption contains a patch $\Omega$ on which it is locally cylindrical.
\newline
Note that Lemma~\ref{lemma3.5} only establishes the existence of a local cylindrification approximation; it does not specify the ruling direction on each patch. When the original function is a cylindrical surface, the ruling direction produced by this construction with the true ruling direction. However, the original function is not cylindrical and therefore does not possess a ruling in the classical sense. This motivates a stronger construction, built on Lemma~\ref{lemma3}, which both selects an appropriate 'generating direction' and remains applicable to non-cylindrical surfaces.

\begin{lemma}\label{lemma3}
Assume that $r\in C^{2}(T\times T)$, $T<\infty$, we keep the same setting as in Lemma~\ref{lemma3.5}.
For each $i$, choose a point $x_i=(s_i,t_i)\in \Omega_i$ and define the local first-order Taylor polynomial
\begin{equation}
    r^{\mathrm{cyl}}_i(s,t)= r(x_i)+ r_s(x_i)(s-s_i)+r_t(x_i)(t-t_i),\qquad (s,t)\in\Omega_i.
\end{equation}
Define the local cylindrification approximation
\begin{equation}
  r^{\mathrm{cyl}}_h(s,t)=\sum_{i=1}^{M} r^{\mathrm{cyl}}_i(s,t)\,\chi_{\Omega_i}(s,t),
\end{equation}
Then there exists a constant $C>0$, independent of $h$, such that
\begin{equation}
  \|r-r^{\mathrm{cyl}}_h\|_{L^2(T\times T)}\leq C h^{2},
\end{equation}
and
\begin{equation}
    \lim_{h^2\to 0}\|r-r^{\mathrm{cyl}}_h\bigr\|_{L^2(T\times T)} \to 0.
\end{equation}
\end{lemma}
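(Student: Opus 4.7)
The plan is to view $r^{\mathrm{cyl}}_i$ as exactly the first order Taylor polynomial of $r$ at the base point $x_i$, so that $C^{2}$ regularity supplies a pointwise error of order $h^{2}$ on each patch; squaring and summing over the disjoint patches then produces the $O(h^{2})$ bound in $L^{2}$.

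First I would invoke Taylor's theorem with integral remainder on the convex patch $\Omega_i$. Setting $y=(s,t)\in\Omega_i$, convexity of $\Omega_i$ ensures the segment $[x_i,y]$ lies entirely inside $\Omega_i\subset T\times T$, so
\begin{equation*}
r(y)-r^{\mathrm{cyl}}_i(y)=\int_{0}^{1}(1-\tau)\bigl(y-x_i\bigr)^{\top}\mathrm{Hess}\,r\bigl(x_i+\tau(y-x_i)\bigr)\bigl(y-x_i\bigr)\,d\tau.
\end{equation*}
Let $M_{2}:=\sup_{x\in T\times T}\|\mathrm{Hess}\,r(x)\|$, finite under the same standing boundedness convention that Lemma~\ref{lemma3.5} already uses for $|\nabla r|$. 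Because $\|y-x_i\|\le\text{diam}(\Omega_i)\le h$, this yields the uniform pointwise bound
\begin{equation*}
\bigl|r(s,t)-r^{\mathrm{cyl}}_i(s,t)\bigr|\le \tfrac{1}{2}M_{2}h^{2},\qquad (s,t)\in\Omega_i.
\end{equation*}

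Next I would square this bound and integrate over $\Omega_i$ to get $\|r-r^{\mathrm{cyl}}_i\|_{L^{2}(\Omega_i)}^{2}\le \tfrac{1}{4}M_{2}^{2}h^{4}|\Omega_i|$. Since the patches are pairwise disjoint and cover $T\times T$,
\begin{equation*}
\|r-r^{\mathrm{cyl}}_h\|_{L^{2}(T\times T)}^{2}=\sum_{i=1}^{M}\|r-r^{\mathrm{cyl}}_i\|_{L^{2}(\Omega_i)}^{2}\le \tfrac{1}{4}M_{2}^{2}h^{4}\sum_{i=1}^{M}|\Omega_i|=\tfrac{1}{4}M_{2}^{2}h^{4}|T\times T|.
\end{equation*}
Taking square roots with $C:=\tfrac{1}{2}M_{2}|T\times T|^{1/2}$ gives $\|r-r^{\mathrm{cyl}}_h\|_{L^{2}(T\times T)}\le Ch^{2}$, and the convergence as $h\to0$ is immediate.

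The analytic work is essentially Taylor expansion plus one summation, so the real content is conceptual rather than technical: one has to check that the construction actually produces cylindrical patches even though the underlying $r$ need not be a cylinder, which was precisely the gap flagged in the paragraph preceding the lemma. This is handled automatically by the choice of affine Taylor pieces, because the graph of an affine function over $\Omega_i$ is a planar patch, which is a (degenerate) cylindrical surface ruled along any direction of its level lines. The associated local ruling direction can be read off from the gradient $(r_{s}(x_i),r_{t}(x_i))$, which is exactly what lets Section~3 turn the WSS test into a comparison of this local ruling direction with the prescribed $(1,1,0)$.
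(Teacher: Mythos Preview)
Your proposal is correct and follows essentially the same approach as the paper: Taylor's theorem on each convex patch to bound the remainder pointwise by $O(h^{2})$ via the supremum of the second derivatives, then squaring, integrating, and summing over the disjoint patches to obtain $\|r-r^{\mathrm{cyl}}_h\|_{L^{2}}\le Ch^{2}$. The only cosmetic differences are your use of the integral remainder and the Hessian operator norm where the paper uses the componentwise maximum of $|r_{ss}|,|r_{st}|,|r_{tt}|$; your closing conceptual remark on why affine patches qualify as cylindrical surfaces is a nice addition that the paper leaves implicit.
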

\begin{proof}
Fix $i\in\{1,\dots,M\}$ and let $x_i=(s_i,t_i)\in\Omega_i$. Since $r\in C^{2}(T\times T)$ and $\Omega_i$ is convex with $\text{diam}(\Omega_i)\le h$, Taylor's theorem with remainder gives, for
every $x=(s,t)\in\Omega_i$,
\begin{equation}
r(x)= r^{\mathrm{cyl}}_i(x) + R_i(x).
\end{equation}
Let
\begin{equation}
      M=\max_{x_i\in T\times T}\{|r_{ss}(x_i)|,|r_{tt}(x_i)|,|r_{st}(x_i)|\},
\end{equation}
then
\begin{equation}
    |R_i(x)|\leq \frac{1}{2}(Mh^2+2Mh^2+Mh^2)\leq C_0h^2,
\end{equation}
for some constant $C_0=2M$ depending only on $r$. It follows that
\begin{equation}
    \|r-r^{\mathrm{cyl}}_i\|_{L^{2}(\Omega_i)}
  = ( \int_{\Omega_i} |R_i(x)|^{2}\,dx )^{1/2}
  \le C_0 h^{2}\,|\Omega_i|^{1/2}.
\end{equation}
Since the interiors of the $\Omega_i$ are disjoint and $T\times T = \bigcup_{i=1}^{M} \Omega_i$, we obtain
\begin{equation}
    \begin{split}
     \|r-r^{\mathrm{cyl}}_h\|_{L^{2}(T\times T)}^{2}
     &=\sum_{i=1}^{M}\|r-r^{\mathrm{cyl}}_i\|_{L^{2}(\Omega_i)}^{2} \\
     &=\sum_{i=1}^{M} C_0^{2} h^{4}|\Omega_i|\\
     &= C_0^{2} h^{4} \sum_{i=1}^{M} |\Omega_i|\\
     &= C_0^{2} h^{4} |T\times T|.
    \end{split}
\end{equation}
Thus
\begin{equation}
      \|r-r^{\mathrm{cyl}}_h\|_{L^2(T\times T)} \leq C h^{2},
\end{equation}
where $C=\sqrt{|T\times T|}C_0$ and
\begin{equation}
    \lim_{h^2\to 0}\|r-r^{\mathrm{cyl}}_h\bigr\|_{L^2(T\times T)} \to 0.
\end{equation}
\end{proof}
Lemma 2.4 provides a more concrete construction, namely approximating the original covariance function by a local first-order Taylor expansion. For each patch $\Omega_i$, at the point $x_i$, the gradient of the Taylor plane coincides exactly with that of the original function, so the directional derivative $r_s+r_t $along the $(1,1)$ direction at $x_i$ agrees perfectly with the original function. This planar approximation does not require the original surface to be a cylindrical surface, and thus applies equally well to general non-cylindrical covariance surfaces.
\newline
At the same time, the Taylor plane provides, in a certain sense, a local “extension” mechanism: it suffices to evaluate $r_s + r_t$ at a single reference point in $\Omega_i$, and we can then construct an approximation to the directional derivative along $(1,1)$ over the entire patch. Moreover, the planar approximation based on the first-order Taylor expansion naturally aligns with the local polynomial regression approach adopted later in Section~3, which is an additional advantage of this construction.
\newline
It is worth noting that the piecewise planar approximation $r_h^{\mathrm{cyl}}$ is smooth within each patch, but its derivatives may be discontinuous across patch boundaries, so it is no longer globally $C^1$. We will show later that this lack of global smoothness does not affect the statistical inference based on $L^2$ error and directional derivatives. In what follows, the term “local cylindrification approximation” will refer to this approximation scheme.

\subsection{WSS Testing}
We now formalize the patched covariance surface and present a stationarity test for it. We first look at a surface that comes from two locally cylindrical patches. From here on, the covariance functions $r$ under consideration are the patched functions constructed by Lemma~\ref{lemma3}. For such functions, we need an appropriate description of the function space.
\begin{assumption}\label{ass1}
Let $\Omega := T\times T$ with $T<\infty$, and let $\Omega_1, \Omega_2 \subset \Omega$ be disjoint open sets with piecewise $C^1$ boundaries. Set
\begin{equation}
    \Sigma = \partial\Omega_1 \cap \partial\Omega_2,
\end{equation}
and assume that $\Sigma$ is a curve of Lebesgue measure zero. Let $2 < p < \infty$. Assume that the covariance surface $r : \Omega \to \mathbb{R}$ satisfies
\begin{enumerate}
  \item $r \in W^{1,p}(\Omega)$,
  \item
  $
    r|_{\Omega_k} \in W^{2,p}(\Omega_k)  \quad M_k := \{(s,t,r(s,t)) : (s,t)\in \Omega_k,k=1,2\}\text{ is a cylindrical surface},
  $
  \item across $\Sigma$ the first order derivatives of $r$ may have jumps.
\end{enumerate}
\end{assumption}
This means that $r$ is globally in $W^{1,p}(\Omega)$. On each locally cylindrical patch $\Omega_k$, it belongs to $W^{2,p}$, and structural changes are captured by joining the patches along $\Sigma$.
\begin{theorem}\label{the}
   Under Assumption~\ref{ass1}, the process $X_t$ is WSS on $T$ if and only if $r_s + r_t = 0$  almost everywhere holds on $\Omega_1$ and on $\Omega_2$.
\end{theorem}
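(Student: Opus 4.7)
The proof splits into necessity and sufficiency. The necessary direction is essentially immediate: if $X_t$ is WSS, Lemma~\ref{lemma1} gives $r_s+r_t=0$ pointwise on $T\times T$, hence a.e.\ on $\Omega_1$ and on $\Omega_2$.

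The sufficient direction requires more. Assuming $r_s+r_t=0$ a.e.\ on each patch, I would first invoke Morrey's embedding: since $r\in W^{1,p}(\Omega)$ with $p>2$ on the planar domain $\Omega$, we have $r\in C^{0,\alpha}(\overline{\Omega})$ with $\alpha=1-2/p$, so $r$ is continuous across the interface $\Sigma$ despite the admissible jumps of its derivatives. On each patch $\Omega_k$ the stronger regularity $r|_{\Omega_k}\in W^{2,p}(\Omega_k)$ similarly upgrades $r$ to $C^{1,\alpha}(\overline{\Omega_k})$, so the a.e.\ identity $r_s+r_t=0$ becomes pointwise on $\Omega_k$. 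I would then apply the change of variables $u=s-t,\;v=t$ from the proof of Lemma~\ref{lemma1} on each connected component of $\Omega_k$ to obtain a function $h_k$ with $r(s,t)=h_k(s-t)$ there.

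The final step is to glue $h_1$ and $h_2$ into a single function $h$ on the interval $T-T$. For each $u\in T-T$, the level set $L_u=\{(s,t)\in T\times T:s-t=u\}$ is a nonempty line segment in the convex domain $T\times T$. On every connected component of $L_u\setminus\Sigma$, $r$ is constant and equals either $h_1(u)$ or $h_2(u)$ according to the containing patch; continuity of $r$ across $\Sigma$ forces these values to coincide whenever $L_u$ meets both patches. Setting $h(u)$ to the common value yields $r(s,t)=h(s-t)$ on all of $T\times T$, and combined with the standing zero-mean hypothesis this is exactly WSS.

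The main obstacle is this gluing step, because Assumption~\ref{ass1} explicitly allows jumps of $r_s$ and $r_t$ across $\Sigma$ and so a priori $h_1$ and $h_2$ could disagree on shared values of $s-t$. The argument turns on the fact that $r$ itself (rather than its derivatives) is globally continuous by the $p>2$ Sobolev embedding, together with the geometric observation that every $(1,1)$-line passing from one patch to the other must cross $\Sigma$; these two ingredients together pin down a single global profile $h$.
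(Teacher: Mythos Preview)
Your argument is correct but takes a genuinely different route from the paper's. For sufficiency the paper works distributionally: it computes $\int_{\Omega} r\,(\phi_s+\phi_t)\,ds\,dt$ for test functions $\phi$, integrates by parts on each patch, and observes that the boundary contributions along $\Sigma$ cancel because the outward normals from $\Omega_1$ and $\Omega_2$ are opposite while $r$ itself has no jump; this yields $r_s+r_t=0$ in the weak sense on all of $\Omega$, after which the change of variables from Lemma~\ref{lemma1} is invoked at the weak-derivative level. You instead stay classical: Morrey gives $r\in C^{0,\alpha}(\overline{\Omega})$ and $r|_{\Omega_k}\in C^{1,\alpha}(\overline{\Omega_k})$, so $r_s+r_t=0$ is pointwise on each patch, and you then glue line-by-line along each $L_u$ using continuity of $r$. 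Your approach is more elementary---no weak derivatives, only one-variable calculus on segments and a connectedness argument---and it makes the role of the hypothesis $p>2$ fully explicit through the Sobolev embedding. The paper's approach has the advantage of establishing the global PDE directly in $W^{1,p}$, which meshes with the function-space framing of Assumption~\ref{ass1} and extends verbatim to finer partitions without tracking individual level lines. One small remark: your intermediate sentence ``$r(s,t)=h_k(s-t)$ on each connected component of $\Omega_k$'' tacitly assumes that the $(1,1)$-slices of such a component are intervals, which need not hold for a general open set; however, your final gluing step works directly on the connected pieces of each $L_u$ and does not actually rely on that intermediate claim, so the argument stands.
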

\begin{proof}
If $X_t$ is WSS and $r|_{\Omega_k} \in W^{2,p}(\Omega_k)$ with $2 < p < \infty$, then Lemma~\ref{lemma1} tells us that $r_s + r_t = 0$ everywhere. Hence, it also holds on $\Omega_1$ and on $\Omega_2$.
\newline
Assume that $r_s + r_t = 0$ holds on $\Omega_1$ and on $\Omega_2$. We distinguish two cases.
\newline
\textbf{Case without first order derivative jumps.} Take $x_n \in \Omega_1$ and $z\in\Sigma$. Since $r$ is continuous and there is no first-order term, we have
\begin{equation}
    \lim_{x_n\to z} r_s(x_n)=r_s(z), \quad \lim_{x_n\to z} r_t(x_n)=r_t(z). 
\end{equation}
Then
\begin{equation}
    r_s(z)+r_t(z)=\lim_{x_n\to z}[r_s(x_n)+r_t(x_n)]=0.
\end{equation}
So $r_s + r_t = 0$ on $T$.
\newline
\textbf{Case with first order derivative jumps.} In this case, classical partial derivatives cannot be defined on $\Sigma$. We therefore use weak derivatives \cite{leoni2017first}. Since $r \in W^{1,p}(T \times T)$ with $p > 2$, the weak partial derivatives $\frac{\partial r}{\partial s}$ and $\frac{\partial r}{\partial t}$ satisfy for any $\phi \in C_c^\infty(T \times T)$
\begin{equation}
    \int_{T \times T} r \frac{\partial \phi}{\partial s}dsdt=-\int_{T \times T} \frac{\partial r}{\partial s}\phi ds dt,
\end{equation}
and
\begin{equation}
\int_{T \times T} r\frac{\partial \phi}{\partial t} dsdt=-\int_{T \times T} \frac{\partial r}{\partial t}\phi ds dt.
\end{equation}
Using integration by parts, we obtain
\begin{equation}
   \int_{T \times T} r \frac{\partial \phi}{\partial s} ds dt = -\int_{\Omega_1} \frac{\partial r}{\partial s} \phi ds dt - \int_{\Omega_2} \frac{\partial r}{\partial s} \phi ds dt + \int_{\Sigma} (r_{\Omega_1} \phi n_{s,1} + r_{\Omega_2} \phi n_{s,2}) d\sigma.
\end{equation}
Here $n_{s,1}$ and $n_{s,2}$ are the components in the $s$ direction of the outward unit normal vectors to $\Omega_1$ and $\Omega_2$ on $\Sigma$ and $d\sigma$ denotes integration on $\Sigma$. Since $r\in W^{1,p}(\Omega)$ we have $r_{\Omega_1}\phi n_{s,1} + r_{\Omega_2}\phi n_{s,2}=r_{\Omega_1}\phi(n_{s,1}+n_{s,2})$ which is bounded and because $n_{s,1}=-n_{s,2}$ the boundary term vanishes
\begin{equation}
    \int_{\Sigma} (r_{\Omega_1}\phi n_{s,1} + r_{\Omega_2}\phi n_{s,2}) d\sigma = \int_{\Sigma}r_{\Omega_1}\phi(n_{s,1}+n_{s,2})d\sigma=0.
\end{equation}
Performing the same operation for the $t$ direction and putting the two identities together, we get
\begin{equation}
    \int_{T \times T} r \left( \frac{\partial \phi}{\partial s} + \frac{\partial \phi}{\partial t} \right)  ds  dt = -\int_{\Omega_1} \left( \frac{\partial r}{\partial s} + \frac{\partial r}{\partial t} \right) \phi  ds  dt - \int_{\Omega_2} \left( \frac{\partial r}{\partial s} + \frac{\partial r}{\partial t} \right) \phi dsdt.
\end{equation}
Since $r_s + r_t=0$ holds on both $\Omega_1$ and $\Omega_2$ we have
\begin{equation}
     \int_{T \times T} r \left( \frac{\partial \phi}{\partial s} + \frac{\partial \phi}{\partial t} \right)ds dt=0.
\end{equation}
Thus in the weak sense $r_s + r_t = 0$ holds on $T\times T$.
\newline
Rewriting Eq.(44) under the change of variables Eq.(6)–(7) in Lemma 2.1, one obtains that $\tilde r(u,v)=r(u+v,v)$ is constant in the v-direction. Hence $\tilde r(u,v)=h(u)$ and therefore $r(s,t)=h(s-t)$.
\end{proof}
In this framework, the relation between testing $r_s + r_t = 0$ and testing WSS has been made explicit. Compared with other approaches, this method asks for only weak smoothness assumptions on the covariance function, but still keeps the test fully equivalent to WSS.

\section{Local Polynomial Regression}
Let $\{(\mathbf X_i,Y_i)\}_{i=1}^n$ be data with
$\mathbf X_i\in  R^{d}$ and model
\begin{equation}
    Y_i=m(\mathbf X_i)+\varepsilon_i,
\end{equation}
where $m$ is an unknown smooth function. Local polynomial regression approximates $m$ in a neighborhood of $\mathbf x_0$ by a Taylor polynomial of degree $p$
\begin{equation}
m(\mathbf x)\approx \sum_{j\le p} \beta_{j}(\mathbf x_0)(\mathbf x-\mathbf x_0)^{j},
\qquad \mathbf x\in \mathbf \delta (x_0).
\end{equation}
The coefficients $\hat {\beta}(\mathbf x_0)$ are obtained from the weighted least squares problem
\begin{equation}
\hat{\beta}(\mathbf x_0) =\min_{\beta} \sum_{i=1}^{n}K_H(\mathbf X_i-\mathbf x_0)\Bigl\{Y_i-\sum_{\|\mathbf j\| \le p} \beta_{j}(\mathbf x_0)(\mathbf X_i-\mathbf x_0)^{j}\Bigr\}^2,
\end{equation}
where $K_H$ is a kernel function and $H$ is the bandwidth matrix. The estimators are  
\begin{equation}
 \hat m^{(i)}(\mathbf x_0)=i!\hat\beta_{i} \quad(\|i\|\le p).
\end{equation}
We choose local polynomial regression for three reasons.
\begin{enumerate}
\item \textbf{Local approximation.}
      The assumption
      $m(\mathbf x)\approx \sum_{\|\mathbf j\| \le p} \beta_{j}(\mathbf x_0)(\mathbf x-\mathbf x_0)^{j}$ is only local and therefore matches the local cylindrification approximation.
\item \textbf{Derivative estimation.}
      Once the local polynomial is fitted, each coefficient corresponds to a partial derivative. Hence we can estimate $\partial_s r$ and $\partial_{t}r$ directly.
\item \textbf{Good asymptotic behavior.}
With suitable bandwidth and sufficiently large sample size, local polynomial regression gives consistent estimators for the regression function and for its derivatives \cite{fan2018local}. Moreover, the suitably scaled estimation errors are asymptotically normal \cite{masry1996multivariate}. This is convenient for later statistical inference.
\end{enumerate}
Note that lemma\ref{lemma3} provides the theoretical justification for using only first-order local polynomial regression: the piecewise first-order Taylor planes already yield the required local cylindrification approximation.
\subsection{Partial derivative estimation}
Let $X(t)$ be a zero-mean and second-order integrable stochastic process, and its covariance is
\begin{equation}
    r(s,t)=E(X_sX_t).
\end{equation}
Let $\{X_k(t)\}_{k=1}^N$ be $N$ independent sample paths. For the $k$ th path define $Y_k(s,t)=X_k(s)X_k(t)$. Then we have a regression model
\begin{equation}
    Y_k(s,t)=r(s,t)+\epsilon_k(s,t),
\end{equation}
where $E(\epsilon(s,t)_k|(s,t))=0$, $Var(\epsilon(s,t)_k|(s,t))=\sigma^2<\infty$ and $E(\epsilon_k^4)<\infty$. We use the two-dimensional Epanechnikov kernel \cite{fan2018local}
\begin{equation}
    K_h=\frac{9}{16}(1-u^2)(1-v^2)\quad |u|\leq 1,|v|\leq 1,
\end{equation}
where $u=\frac{s-s_0}{h}$, $v=\frac{t-t_0}{h}$ and $h$ is the bandwidth. We assume a fixed uniform grid.
\newline
We expand $r(s,t)$ at $(s_0, t_0)$ as
\begin{equation}
    r(s,t)\approx \beta_0 +\beta_1(s-s_0)+\beta_2(t-t_0),
\end{equation}
which is enough because we only need $r_s$ and $r_t$. On this grid, let $\{(s_i,t_i)\}_{i=1}^n$ denote the design points in $T\times T$.
For each $(s_i,t_i)$ we define the empirical covariance
\begin{equation}
    \begin{split}
         \hat r(s_i,t_i)&=\frac{1}{N}\sum_{k=1}^N Y_k(s_i,t_i)\\
         &=\frac{1}{N}\sum_{k=1}^N X_k(s_i)X_k(t_i).
    \end{split}
\end{equation}
In the regression notation we write $(X_i,Y_i)=((s_i,t_i),\hat r_i)$.The local linear estimator at $(s_0,t_0)$ minimizes
\begin{equation}
    \mathcal L(\beta) = \sum_{i=1}^n K_h(s_i-s_0, t_i-t_0)\,(Y_i - \beta_0 - \beta_1(s_i-s_0) - \beta_2(t_i-t_0))^2.
\end{equation}
In matrix form
\begin{equation}
\mathcal L(\beta)=(\mathbf Y-X\beta)^{\top}W(\mathbf Y-X\beta),
\end{equation}
where $\mathbf Y$ is the response vector, $\mathbf W$ is the weight matrix and 
\begin{equation}
X=
\begin{pmatrix}
1 & s_1-s_0 & t_1-t_0\\
\vdots & \vdots & \vdots\\
1 & s_n-s_0 & t_n-t_0
\end{pmatrix}.
\end{equation}
For sample points $(s, t, Y(s,t))$ with $ s,t\in [s_0-h,s_0+h]\times[t_0-h,t_0+h]$ the least squares estimator is
\begin{equation}
\hat{\beta}=(X^{\top} W X)^{-1}X^{\top} W\mathbf Y.
\end{equation}
Thus we obtain
\begin{equation}
   \hat r=\hat\beta_0,\quad \hat r_s=\hat\beta_1,\quad \hat r_t=\hat\beta_2.
\end{equation}
Note that a uniform rectangular grid is perfectly suitable for local cylindrification.
\subsection{Asymptotic theory}
We now show that $\hat{\beta}$ is a consistent estimator.
\begin{lemma}\label{lemma4}
    $\hat{\beta}$ is a consistent estimator of $\beta$ provided that
    \begin{enumerate}
        \item $h\to 0$ and $nh^{4}\to\infty$,
        \item $r$ has continuous and bounded second order partial derivatives at $(s_0, t_0)$.
    \end{enumerate}
\end{lemma}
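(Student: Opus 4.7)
The plan is to adapt the standard bias--variance analysis for local polynomial regression to the present two-dimensional uniform-grid setting. Start from the closed form
\begin{equation*}
\hat{\beta} - \beta = (X^{\top} W X)^{-1} X^{\top} W (\mathbf{Y} - X\beta),
\end{equation*}
and Taylor-expand $r$ at $(s_0, t_0)$ to second order to write $\hat r(s_i, t_i) = X_i^{\top}\beta + R_i + \xi_i$, where $X_i = (1,\, s_i-s_0,\, t_i-t_0)^{\top}$, $R_i$ is the quadratic Taylor remainder involving the second partials of $r$, and $\xi_i := \hat r(s_i,t_i) - r(s_i,t_i)$ is the empirical-covariance noise. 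Since the $N$ paths are independent, $E[\xi_i]=0$ and $\mathrm{Var}(\xi_i)\le \sigma^2/N$ uniformly. This produces the usual split $\hat{\beta} - \beta = (X^{\top} W X)^{-1} X^{\top} W R + (X^{\top} W X)^{-1} X^{\top} W \xi$ into a deterministic bias term and a zero-mean stochastic term.

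The first substantive step is to analyze the rescaled design Gram matrix $S_n := D^{-1}(X^{\top} W X) D^{-1}$ with $D = \mathrm{diag}(1, h, h)$. Because the design grid is uniform and the Epanechnikov kernel has compact support $[-1,1]^2$, each entry of $S_n/(nh^2)$ is a Riemann sum approximating a kernel moment $\int u^i v^j K(u,v)\,du\,dv$. As $h\to 0$ these sums converge to a fixed positive-definite matrix $S$, so $S_n$ is eventually invertible and $D(X^{\top} W X)^{-1} D$ is of order $(nh^2)^{-1}$.

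Next I would bound the bias. The assumed boundedness of the second partials gives $|R_i|\le Ch^2$ uniformly on the kernel window, and the leading part of $R_i$ is an explicit quadratic in $(s_i-s_0,\, t_i-t_0)$. Reusing the Riemann-sum convergence of the kernel moments, one obtains $D^{-1}\bigl[(X^{\top} W X)^{-1} X^{\top} W R\bigr] = O(h^2)$ componentwise, so the intercept bias is $O(h^2)$ and each slope bias is $O(h)$, both vanishing as $h\to 0$. For the stochastic part, a direct covariance computation for $(X^{\top} W X)^{-1} X^{\top} W \xi$, using the uniform bound on $\mathrm{Var}(\xi_i)$ together with the $D$-rescaling, shows that the $(1,1)$ entry of the covariance is of order $(nh^2)^{-1}$ while the $(2,2)$ and $(3,3)$ entries are of order $(nh^4)^{-1}$. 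Under the hypothesis $nh^4\to\infty$ all three variances vanish, and combining with the $o(1)$ bias, Chebyshev's inequality yields $\hat{\beta}\xrightarrow{P}\beta$.

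The main obstacle is the bookkeeping for the derivative coefficients: their variance decays only at the rate $(nh^4)^{-1}$, which is exactly why the lemma requires $nh^4\to\infty$ rather than the weaker $nh^2\to\infty$ that would suffice for the intercept alone. The cleanest way to keep this straight is to carry the diagonal rescaling $D$ and the normalized matrix $S_n$ through every step, so that each block carries an explicit power of $h$. A secondary technical point---uniform convergence of the kernel Riemann sums and treatment of reference points near $\partial(T\times T)$---is standard once the interior analysis is in place, and the lemma is implicitly stated for an interior $(s_0,t_0)$.
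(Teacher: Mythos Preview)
Your proposal is correct and follows essentially the same route as the paper: decompose $\hat\beta-\beta$ into a deterministic bias term from the second-order Taylor remainder and a zero-mean stochastic term, bound each by computing the orders of the kernel-weighted design moments, and finish with Chebyshev. The paper does the order bookkeeping by directly tracking the exponents $p+q$ in the entries of $X^\top W X$, whereas you organize the same computation through the diagonal rescaling $D=\mathrm{diag}(1,h,h)$; both arrive at bias $O(h)$ and variance $O((nh^4)^{-1})$ for the derivative components, which is exactly why $nh^4\to\infty$ is the right hypothesis. (One small slip: where you write $D^{-1}\bigl[(X^\top WX)^{-1}X^\top WR\bigr]=O(h^2)$ you mean $D\bigl[\cdots\bigr]=O(h^2)$, since that is what gives slope bias $O(h)$ rather than $O(h^3)$; your stated conclusions match the correct scaling.)
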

\begin{proof}
The idea is to use the Chebyshev inequality. We start from
\begin{equation}\label{e1}
    \hat{\beta}=(X^{\top} W X)^{-1}X^{\top} W(r+\epsilon)=(X^{\top} W X)^{-1}X^{\top} Wr+(X^{\top} W X)^{-1}X^{\top} W\epsilon,
\end{equation}
so
\begin{equation}
    E(\hat{\beta})=(X^{\top} W X)^{-1}X^{\top} Wr
\end{equation}
and
\begin{equation}
    Var(\hat{\beta})=\sigma^{2}(X^{\top}WX)^{-1}X^{\top}W^{2}X(X^{\top}WX)^{-1}.
\end{equation}
Using the behaviour of $h$ and $n$ together with the properties of the kernel, we can prove separately that
\begin{equation}
\begin{split}
        &(X^{\top} W X)^{-1}X^{\top} Wr \to \beta,\\
    &\sigma^{2}(X^{\top}WX)^{-1}X^{\top}W^{2}X(X^{\top}WX)^{-1} \to 0.
\end{split}
\end{equation}
Take a second order Taylor expansion of $r(s_i,t_i)$
\begin{equation}
    r(s_i,t_i)=\beta Z_i^{\top}+h^2R_i,
\end{equation}
where $h^2R_i$ is the second-order remainder and
\begin{equation}
    X=
    \begin{bmatrix}
        Z_1\\
        \vdots\\
        Z_n
    \end{bmatrix}.
\end{equation}
Substitute into Eq. (61)
\begin{equation}
\begin{split}
    (X^{\top} W X)^{-1}X^{\top} Wr&=(X^{\top} W X)^{-1}X^{\top} W\beta X+(X^{\top} W X)^{-1}X^{\top} Wh^2R\\
    &=\beta+(X^{\top} W X)^{-1}X^{\top} Wh^2R.
\end{split}
\end{equation}
The magnitude of $(X^{\top} W X)^{-1} X^{\top} W h^2 R$ depends on the sample size and on the bandwidth under a fixed sampling density. Denote by $m$ the number of sample points inside the support of the kernel
\begin{equation}
    m \asymp nh^2.
\end{equation}
where $n$ is the total number of design points and $h^2$ is the area of the window. If the window area is fixed, then the number of sample points depends only on $n$. Moreover
\begin{equation}
    |s-s_0|^p<h^p,\quad |t-t_0|^q<h^q,
\end{equation}
thus
\begin{equation}
    |s-s_0||t-t_0|\leq h^{p+q}.
\end{equation}
For every element of $(X^{\top} W X)_{pq}$ we have
\begin{equation}
\begin{split}
     \Bigl|\sum_{i=1}^{m} K_i (s_i - s_0)^p (t_i - t_0)^q\Bigr| &< \sum_{i=1}^{m} h^{p+q} =mh^{p+q}\asymp n h^{p+q+2}.
\end{split}
\end{equation}
where $p+q\in\{0,2\}$. We find
\begin{equation}
    (X^{\top} W X)_{pq}\asymp nh^{p+q+2}.
\end{equation}
Similarly since $|R|<C$ we have
\begin{equation}
    h^2X^{\top} WR \asymp nh^{\frac{p+q}{2}+4}.
\end{equation}
Hence
\begin{equation}
    (X^{\top} W X)^{-1}X^{\top} Wh^2R \asymp nh^{\frac{p+q}{2}+4} \frac{1}{n h^{p+q+2}}=h^{2-\frac{p+q}{2}},
\end{equation}
and
\begin{equation}
    \sigma^{2}(X^{\top}WX)^{-1}X^{\top}W^{2}X(X^{\top}WX)^{-1} \asymp\frac{1}{n h^{p+q+2}}.
\end{equation}
Here, we focus on the case p+q=2. Hence $h\to 0$ and $nh^{4}\to\infty$
\begin{eqnarray}
    Var(\hat{\beta})\to 0,\\
    E(\hat{\beta})-\beta \to 0.
\end{eqnarray}
By Chebyshev's inequality, we have
\begin{equation}
    \lim_{h\to 0,nh^{4}\to\infty}P\{|\hat{\beta}-\beta|\geq \xi \}\leq \frac{Var(\hat{\beta})}{\xi^2}=0.
\end{equation}
\end{proof}
\begin{theorem}
    Under Lemma \ref{lemma4}, the estimator $\hat{J}=\hat{r}_s+\hat{r}_t$ is a consistent estimator of $J=r_s+r_t$.
\end{theorem}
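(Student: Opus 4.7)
The plan is to deduce the statement directly from Lemma~\ref{lemma4} by a simple triangle-inequality and union-bound argument, so no new asymptotic calculation is needed. First I would read off from the local linear fit in Section~3.1 that $\hat r_s=\hat\beta_1$ and $\hat r_t=\hat\beta_2$, while the first-order Taylor expansion of $r$ at $(s_0,t_0)$ identifies the corresponding targets as $\beta_1=r_s(s_0,t_0)$ and $\beta_2=r_t(s_0,t_0)$. Consequently $\hat J-J=(\hat\beta_1-\beta_1)+(\hat\beta_2-\beta_2)$, and the triangle inequality immediately gives
\begin{equation}
    |\hat J-J|\le |\hat\beta_1-\beta_1|+|\hat\beta_2-\beta_2|.
\end{equation}

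Next I would convert this deterministic bound into a probability statement. For any $\xi>0$, the display above shows that the event $\{|\hat J-J|\ge\xi\}$ is contained in $\{|\hat\beta_1-\beta_1|\ge\xi/2\}\cup\{|\hat\beta_2-\beta_2|\ge\xi/2\}$, so a union bound yields
\begin{equation}
    P(|\hat J-J|\ge \xi) \le P(|\hat\beta_1-\beta_1|\ge \xi/2) + P(|\hat\beta_2-\beta_2|\ge \xi/2).
\end{equation}
Under $h\to 0$ and $nh^{4}\to\infty$ together with the smoothness hypothesis on $r$ at $(s_0,t_0)$, Lemma~\ref{lemma4} supplies consistency of the full vector $\hat\beta$, and in particular of its first two coordinates, so each term on the right vanishes. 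Hence $\hat J\to J$ in probability, which is the claimed consistency.

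There is no genuine obstacle here; the only point I would be careful to state is that $\hat\beta_1$ and $\hat\beta_2$ come from the same weighted least squares problem at $(s_0,t_0)$, so the bandwidth and sample-size conditions in Lemma~\ref{lemma4} are invoked once and apply to both coordinates simultaneously. An equivalent but slicker route would be to appeal to the continuous mapping theorem applied to the linear map $(\beta_1,\beta_2)\mapsto \beta_1+\beta_2$, but the elementary union-bound keeps the presentation consistent with the Chebyshev-based argument already used in Lemma~\ref{lemma4}.
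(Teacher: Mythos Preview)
Your argument is correct and is essentially the same as the paper's: the paper invokes Lemma~\ref{lemma4} to get $(\hat r_s,\hat r_t)\xrightarrow{p}(r_s,r_t)$ and then applies the continuous mapping theorem to the sum, which is exactly the ``slicker route'' you describe at the end. Your triangle-inequality/union-bound version is just the elementary unfolding of that step for a linear map, so there is no substantive difference.
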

\begin{proof}
    Lemma \ref{lemma4} shows that $\hat{\beta}$ is a consistent estimator. Hence
    \begin{equation}
        (\hat r_s,\hat r_t)\xrightarrow{p}(r_s,r_t)
    \end{equation}
    and since
    \begin{equation}
    \hat J=\hat{r}_s+\hat{r}_t
    \end{equation}
    is a continuous map the continuous mapping theorem \cite{billingsley2013convergence} gives that $\hat J$ is a consistent estimator of $J$.
\end{proof}
We now consider the bias asymptotic distribution.
\begin{theorem}
  Assume Lemma~\ref{lemma4} and $nh^6\to 0$. Then
  \begin{equation}
       \sqrt{nh^4}(\hat  J- J)\xrightarrow{d} N(0, \nabla J(\beta)^{\!\top}\Sigma \nabla J(\beta))
  \end{equation}
  where $\nabla J(\beta)=(0,1,1)$ and $\Sigma$ is a symmetric positive definite matrix.
\end{theorem}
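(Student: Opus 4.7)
The plan is to reuse the bias-plus-noise decomposition from the proof of Lemma~\ref{lemma4}, argue that the scaled bias is asymptotically negligible under the undersmoothing condition $nh^{6}\to 0$, then establish a multivariate CLT for the stochastic term by the standard triangular-array route, and finally obtain the one-dimensional limit by applying the linear functional $(0,1,1)$.

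Concretely, from $\hat\beta - \beta = (X^{\top} W X)^{-1} X^{\top} W h^{2} R + (X^{\top} W X)^{-1} X^{\top} W \epsilon$ as in Lemma~\ref{lemma4}, the computation carried out there already shows that, on the derivative coordinates, the bias term is of order $h$. Multiplying by the scaling $\sqrt{nh^{4}}$ yields $O(\sqrt{nh^{6}})$, which vanishes under the hypothesis, so the bias does not contribute to the limit. For the stochastic term I would introduce the rescaling matrix $D=\mathrm{diag}(1,h,h)$, set $\tilde X = X D^{-1}$ so that its rows are uniformly bounded on the kernel support, and write $D(\hat\beta-\beta)_{\mathrm{stoch}} = (\tilde X^{\top} W \tilde X)^{-1}\tilde X^{\top} W \epsilon$. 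A Riemann-sum argument on the uniform grid shows $(nh^{2})^{-1}\tilde X^{\top} W \tilde X \to S$ for a positive definite kernel-moment matrix $S$, and the centered vector $(nh^{2})^{-1/2}\tilde X^{\top} W \epsilon$ is a sum of independent mean-zero random vectors whose covariance tends to $\sigma^{2} S^{\ast}$, where $S^{\ast}$ is the analogous moment matrix with $K$ replaced by $K^{2}$. Under $E\epsilon^{4}<\infty$ and the boundedness of the Epanechnikov kernel, Lyapunov's condition is straightforward, and the multivariate Lyapunov CLT then delivers $(nh^{2})^{-1/2}\tilde X^{\top} W \epsilon \xrightarrow{d} N(0,\sigma^{2} S^{\ast})$.

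Combining via Slutsky gives $\sqrt{nh^{2}}\,D(\hat\beta-\beta) \xrightarrow{d} N(0,\sigma^{2} S^{-1}S^{\ast}S^{-1})$; its second and third coordinates are exactly $\sqrt{nh^{4}}(\hat r_{s}-r_{s},\hat r_{t}-r_{t})$, so applying the continuous mapping theorem to the linear map $(a_{0},a_{1},a_{2})\mapsto a_{1}+a_{2}$ (equivalently the delta method with gradient $\nabla J(\beta)=(0,1,1)$) produces the claimed limit, with $\Sigma=\sigma^{2}S^{-1}S^{\ast}S^{-1}$ inheriting positive-definiteness from $S$ and $S^{\ast}$. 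The main obstacle will be the CLT step, specifically the dependence structure of $\epsilon$: the noise $\epsilon_{i}=\hat r(s_{i},t_{i})-r(s_{i},t_{i})$ is built by averaging the shared $N$ sample paths $X_{k}$, so its components are correlated across grid points. Either $N$ must be tied to $n$ in a joint regime and a triangular-array CLT under weak dependence invoked, or $N$ is held fixed and one leans on the bilinear structure $Y_{k}(s,t)=X_{k}(s)X_{k}(t)$ to absorb the cross-grid dependence into $\sigma^{2}S^{\ast}$; pinning down this regime and verifying Lyapunov's condition within it is where the real work lies, while everything else is bookkeeping.
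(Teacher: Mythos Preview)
Your proposal follows essentially the same route as the paper's own proof: the bias--noise decomposition inherited from Lemma~\ref{lemma4}, the undersmoothing argument $\sqrt{nh^{4}}\cdot O(h)=O(\sqrt{nh^{6}})\to 0$ to kill the bias, a triangular-array CLT for the weighted error sum, and the delta method with $\nabla J=(0,1,1)$ at the end. Your rescaling via $D=\mathrm{diag}(1,h,h)$ and the explicit sandwich form $\Sigma=\sigma^{2}S^{-1}S^{\ast}S^{-1}$ are just a cleaner bookkeeping of the same order computations the paper sketches, and the cross-grid dependence of $\epsilon$ that you single out as the main obstacle is real and is not resolved in the paper's argument either---it simply invokes a weighted CLT as if the $\varepsilon_i$ were independent.
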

\begin{proof}
From Eq. (66), we have
\begin{equation}
\hat\beta-\beta=(X^\top W X)^{-1}X^\top W\epsilon+(X^\top W X)^{-1}X^\top W h^{2}R.
\end{equation}
Write
\begin{equation}
    S=(X^\top W X)^{-1}X^\top W\epsilon ,\quad U=(X^\top W X)^{-1}X^\top W h^{2}R.
\end{equation}
From Eq. (66) and from $nh^6\to 0$, we have
\begin{equation}
    \sqrt{nh^4} U\asymp\sqrt{nh^6} \to 0.
\end{equation}
The random term $S$ can be written as
\begin{equation}
    S = \sum_{i=1}^{n} a_{n,i}\varepsilon_i,\quad a_{n,i} = (X^{\top}WX)^{-1} X_i w_i,
\end{equation}
where $X_i$ is the $i$th row of $X$ and $w_i$ is the kernel weight at $(s_i,t_i)$. When $h\to 0$ and $nh^{4}\to\infty$ the standard arguments for two dimensional local polynomial regression imply
\begin{equation}
    \begin{split}
    &\sup_{1\le i\le n} \| \sqrt{nh^{4}}a_{n,i} \|\asymp \frac{1}{ \sqrt{nh^{4}}} \to 0,\\
    &\sum_{i=1}^{n} nh^{4}\, a_{n,i} a_{n,i}^{\top}\asymp nh^{4}\frac{1}{nh^{4}} \rightarrow\ \Sigma
    \end{split}
\end{equation}
for some symmetric positive definite matrix $\Sigma$. By a weighted central limit theorem \cite{van2000asymptotic}
\begin{equation}
    \sum_{i=1}^n\sqrt{nh^{4}}a_{n,i}=S\xrightarrow{d}N(0,\Sigma).
\end{equation}
Combine the two parts
\begin{equation}
\sqrt{nh^{4}}(\hat\beta - \beta)=\sqrt{nh^{4}}S+\sqrt{nh^{4}}U \xrightarrow{d} N(0,\Sigma).
\end{equation}
Since $J(\beta)$ is continuous and
\begin{equation}
    \nabla J(\beta)=(0,1,1)\neq \mathbf{0},
\end{equation}
the delta method \cite{van2000asymptotic} yields
\begin{equation}
    \sqrt{nh^4}(\hat  J- J)\xrightarrow{d} N(0, \nabla J(\beta)^{\!\top}\Sigma \nabla J(\beta)).
\end{equation}
\end{proof}
We have thus shown that with suitable choices of bandwidth $h$ and sample size $N$, the statistic $\hat J$ is consistent and asymptotically normal. We can now apply it to some typical stochastic dynamical models.

\section{Simulation}
\subsection{SDOF}
Consider the single degree of freedom model
\begin{equation}
    m\ddot{X}+c\dot{X}+kX=\xi(t)\quad X(0)=X_0,\dot{X}(0)=\dot{X}_0,
\end{equation}
where $m$ is the mass, $k$ is the spring constant, $c$ is the damping coefficient and $\xi(t)$ is a white noise. Known results tell us that for the covariance function $r(s, t)$, the response $X$ becomes WSS when $s$ and $t$ go to infinity while $|s-t|$ remains bounded. Hence, we can use this model to check our method.
\newline
To guarantee consistency and asymptotic normality of the estimator, the bandwidth $h$ and  $n$ must satisfy
\begin{itemize}
    \item $h\to 0$
    \item $nh^4\to\infty$
    \item $ nh^6\to0 $
\end{itemize}
Hence
\begin{equation}
    h = C n^{-a}, \quad \tfrac{1}{4} > a > \tfrac{1}{6}.
\end{equation}
The parameter values used in the simulation are listed in Table~\ref{sdof_params}.
\begin{table}[htbp]
\centering
\small
\caption{Simulation parameters for the SDOF model ($N = 10000$ realizations).}
\label{sdof_params}
\renewcommand{\arraystretch}{1.15}
\begin{tabular}{
    @{} l
    l
    c
    l
    S[table-format = 5.3]
    @{}
}
\toprule
\textbf{Category} & \textbf{Type} & \textbf{Symbol} & \textbf{Definition} & {\textbf{Value}} \\ 
\midrule
\multirow{5}{*}{System} 
  & Mass                        & $m$          & given                                              & 1 \\ 
  & Spring constant             & $k$          & $\omega_n=\sqrt{k/m}=2\ \mathrm{rad\,s^{-1}}$      & 4 \\ 
  & Damping coefficient         & $c$          & $c = 2\zeta\sqrt{mk}$,\; $\zeta=0.05$              & 0.2 \\ 
  & Damping ratio               & $\zeta$      & lightly damped                                     & 0.05 \\ 
  & Noise intensity             & $D$          & the intensity of $\xi(t)$                          & 1 \\
\midrule
\multirow{3}{*}{Discretizations} 
  & Time step                   & $\Delta t$   & Euler–Maruyama step                                & 0.005 \\ 
  & Total duration              & $T$          & $\ge 10\tau_d,\;\tau_d = 1/(\zeta\omega_n)$        & 200  \\ 
  & Sample points               & $n$        & $T/\Delta t + 1$                                   & 40001 \\ 
\midrule
\multirow{4}{*}{LPR bandwidth} 
  & Realizations                & $N$          & given                                              & 10000 \\ 
  & Continuous bandwidth        & $h$          & $h = C n^{-1/5},\;C = 1$                           & 0.12 \\ 
  & Window size                 & $2L\!+\!1$   & $L = \lceil h/\Delta t \rceil$                     & 50 \\ 
\bottomrule
\end{tabular}
\end{table}
\begin{figure}[H]
    \centering
    \includegraphics[width=0.8\linewidth]{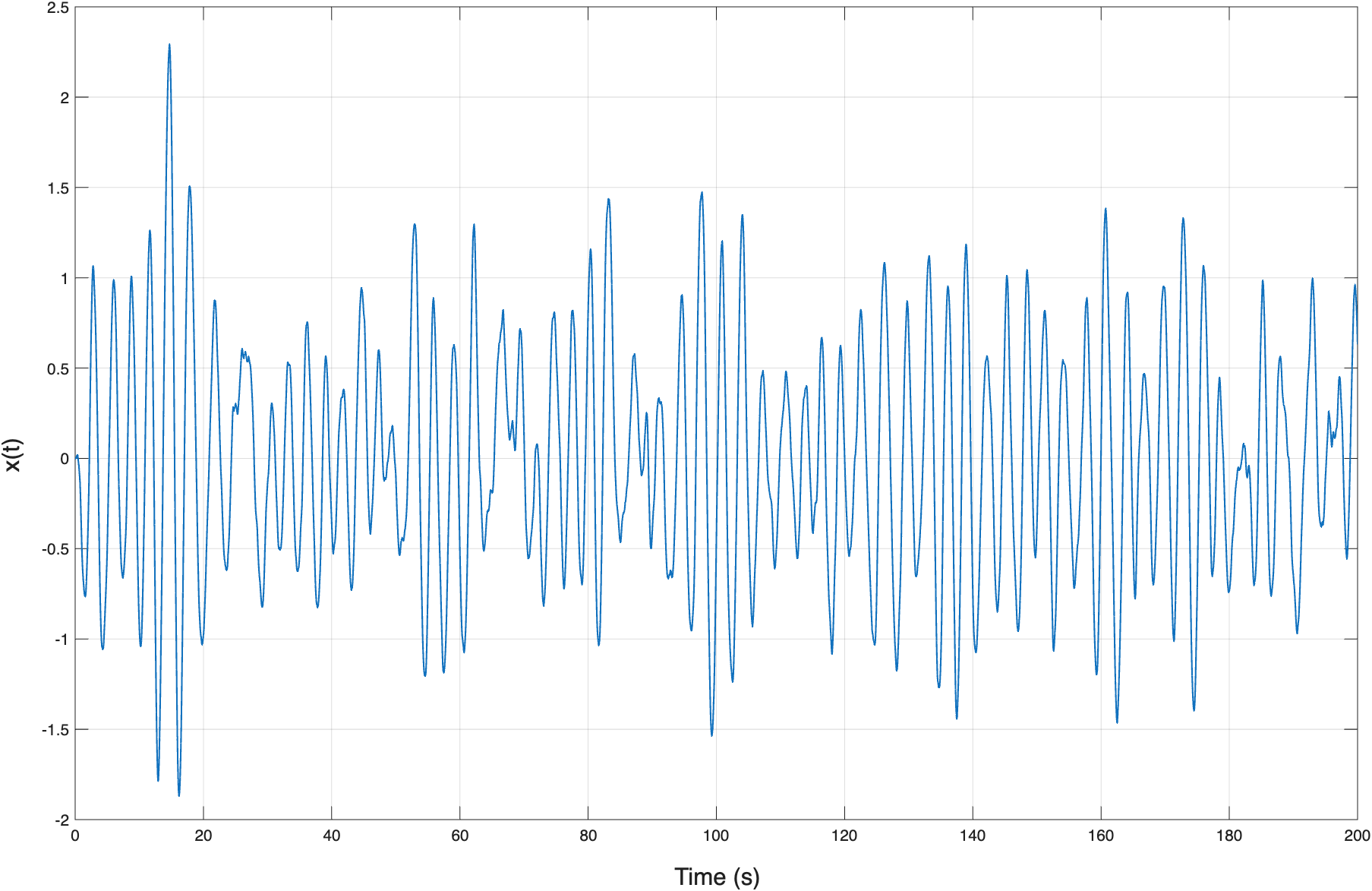}
    \caption{A sample path of SDOF Response.}
    \label{Figure1}
\end{figure}
Figure~\ref{Figure1} shows one sample path of $X$ over the time interval from zero to two hundred seconds. We use the statistic introduced in Section 3. We partition the time interval $[0, T]$ into subintervals $\{\tau_i\}_{i=1}^M $ such that $\bigcup_{i=1}^M \tau_i = [0, T]$. On each $\tau_i$, we consider the restriction of the process, $X_{\tau_i}$. Corresponding to the local cylindrical approximation, we take the patch $\Omega_i = \tau_i \times \tau_i$. In practice, it suffices to choose evaluation points $(s_i, t_i) \in \Omega_i$; for convenience, we select the diagonal representatives $s_i = t_i$ to compute the ruling direction of $\Omega_i$.For each $\tau_i$ we evaluate the statistic at the diagonal point $(s_0,t_0)=(t_i,t_i)$ and write
 \begin{equation}
      \hat J(\tau_i) = \hat r_s(t_i,t_i)+\hat r_t(t_i,t_i). 
 \end{equation}
Although we report the statistic at the diagonal point $(t_i,t_i)$ for convenience, it is estimated via a local polynomial fit on the covariance surface using nearby values in $\Omega_i$, rather than by pointwise differentiation of the diagonal. Of course, we could introduce small offsets, but under our theoretical framework, this is unnecessary and would instead sacrifice computational and inferential convenience.
\begin{figure}[H]
    \centering
    \includegraphics[width=0.8\linewidth]{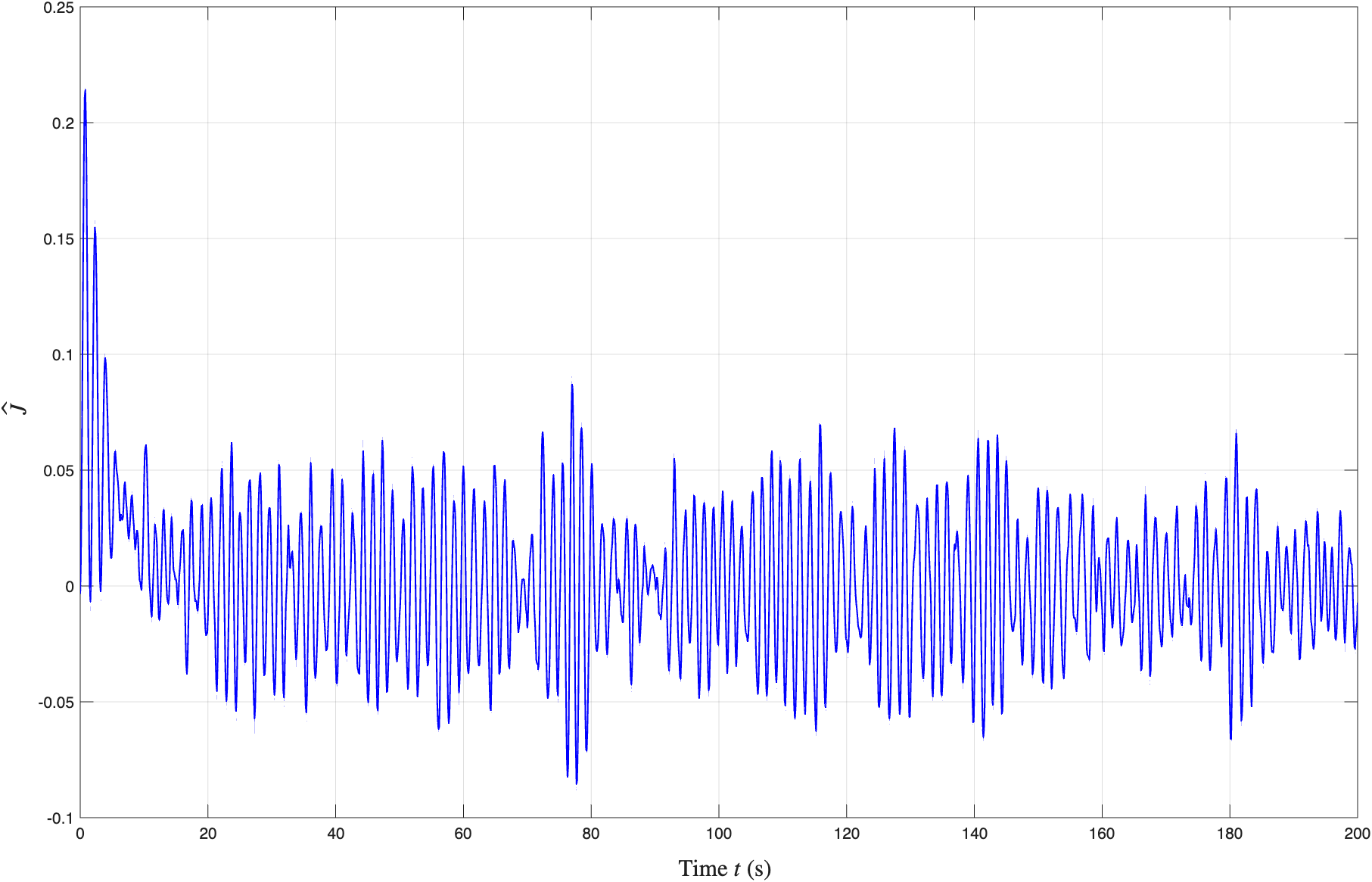}
    \caption{Time series of the statistic $\hat J$.}
    \label{Figure2}
\end{figure}
As we can see from Figure~\ref{Figure2}, the statistic $J$ starts near $0.2$. Then it decreases and is close to zero at about fifteen seconds. After that, it stays around zero. In terms of WSS, this means that the process is non-stationary at the beginning, then goes through a transition stage, and finally becomes stationary. Part of the fluctuation in Figure~\ref{Figure2} is due to finite sample size. As shown in Figure~\ref{Figure_compsize}, the variance of $\hat J$ becomes smaller when the sample size increases. However, the time points where WSS holds or fails are stable across the sample sizes considered.
\begin{figure}[H]
    \centering
    \includegraphics[width=0.8\linewidth]{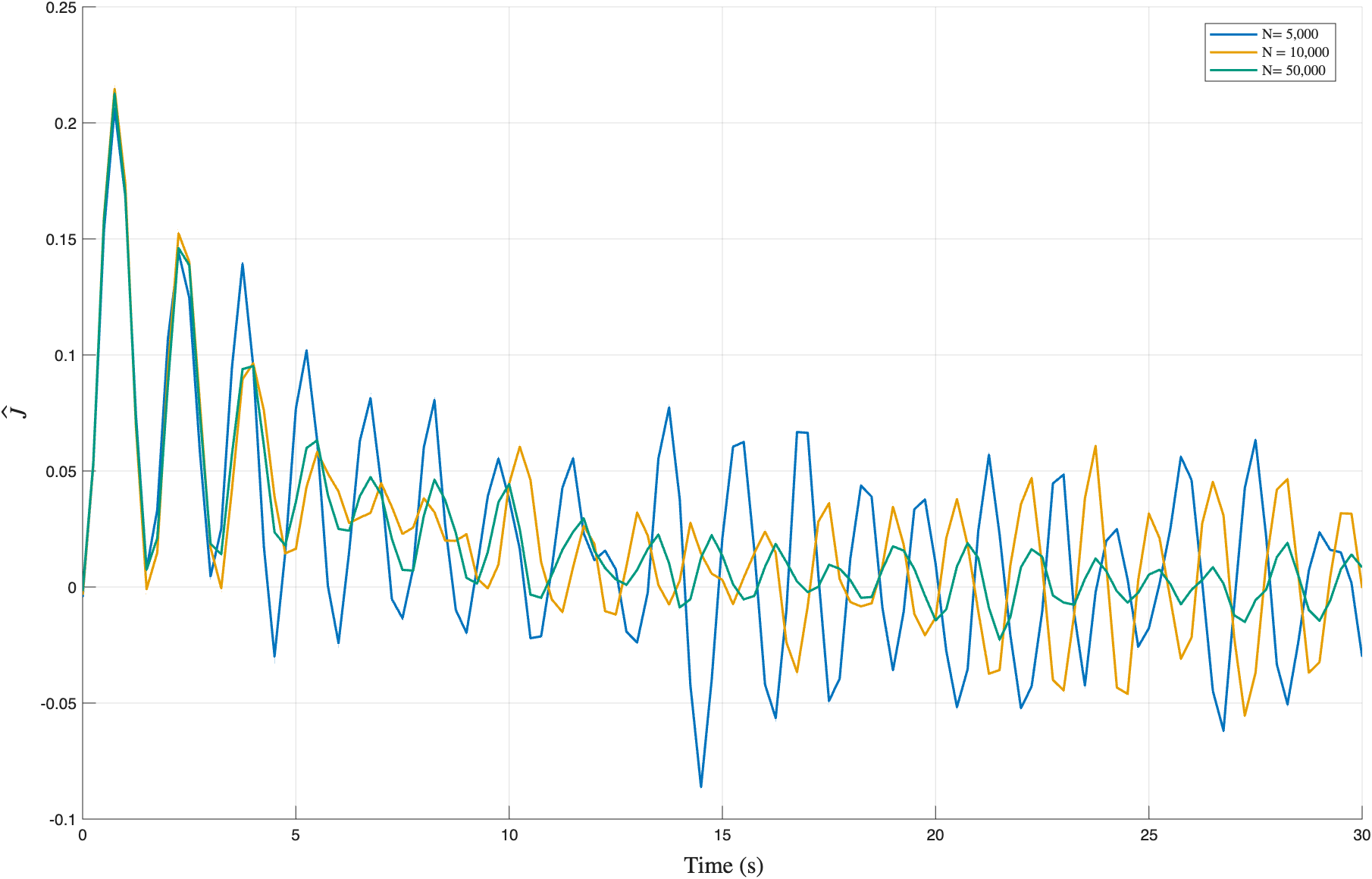}
    \caption{Time series of $\hat J$ over $[0,30]$ seconds for different sample sizes.}
    \label{Figure_compsize}
\end{figure}
Using the asymptotic normality obtained in Section 3, we set up a hypothesis test. Since
\begin{equation}
     \sqrt{nh^4}(\hat  J- J)\xrightarrow{d} N(0, \nabla J(\beta)^{\!\top}\Sigma \nabla J(\beta)),
\end{equation}
Testing is reduced to checking whether the mean of $\hat J$ is zero. When $E(\hat J) = 0$, we have $J = 0$ and hence the process is WSS. We write
\begin{equation}
    H_0:\ E(\hat J)=0 \quad \text{vs}\quad H_1:\ E(\hat J)\neq 0.
\end{equation}
For a fixed point $(s_0, t_0)$, we split the $N$ sample paths into $G$ groups and estimate $\hat{J}$ for every group with $N/G$ samples. We thus get $G$ values of $\hat{J}$ for this point, denoted by $\hat{J}_i,i=1,2\ldots G$. The estimators of the mean and variance are
\begin{equation}
    \overline{J}=\frac{1}{G}\sum_{i=1}^G\hat{J}_i ,\quad S=\frac{1}{G-1}\sum_{i=1}^G(\hat{J}_i-\overline{J})^2.
\end{equation}
The test statistic is
\begin{equation}
    T=\frac{ \overline{J}}{ S/\sqrt{G}}.
\end{equation}
Under $H_0$ we have $T\sim t_{G-1}$. Therefore, for a chosen confidence level $\alpha$, we can decide whether WSS holds at each time point. The final result is shown in Figure~\ref{Figure4}.
\begin{figure}[H]
    \centering
    \includegraphics[width=0.8\linewidth]{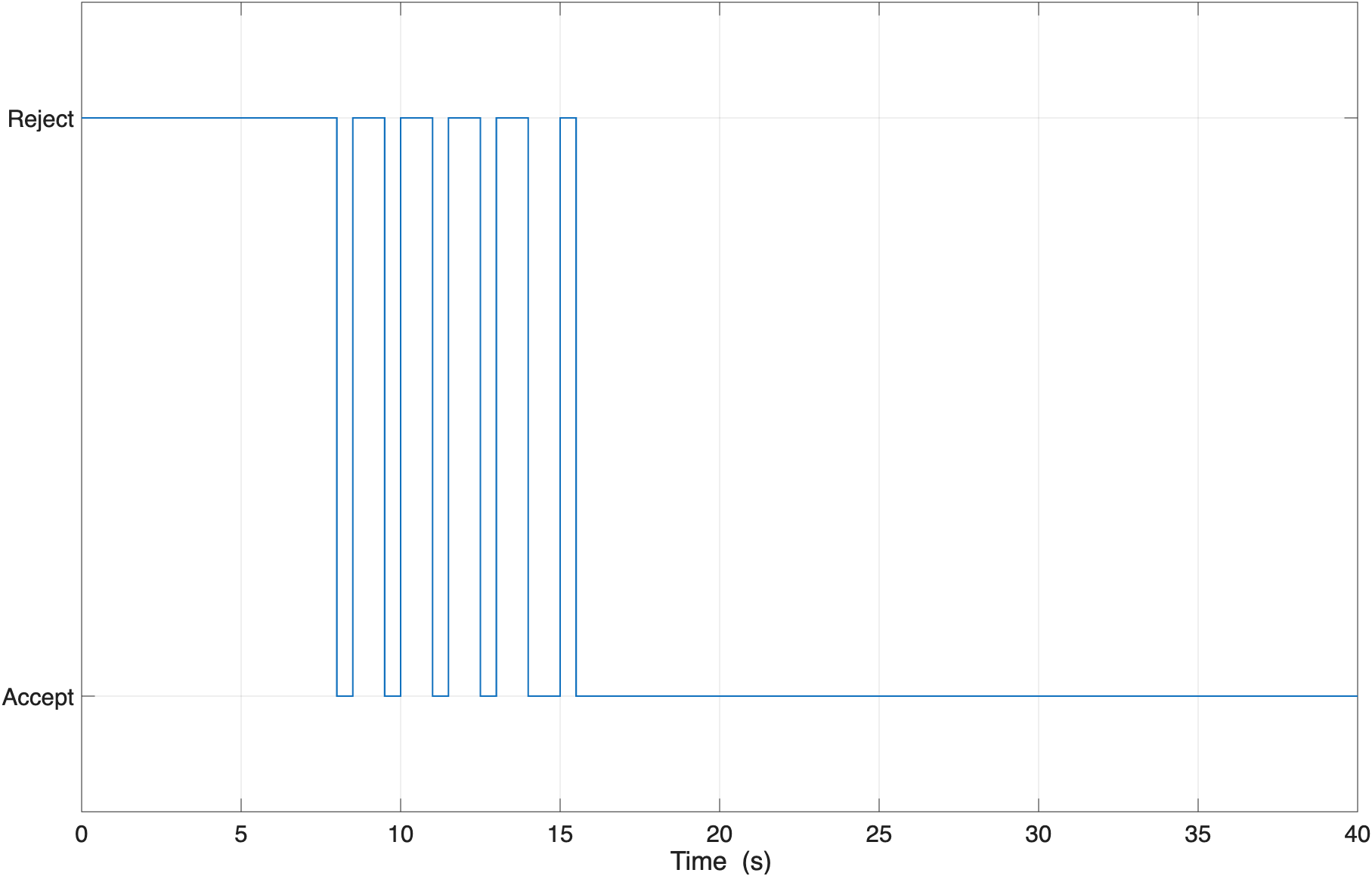}
    \caption{Two sided test with confidence level $0.95$ for the mean of $\hat{J}$.}
    \label{Figure4}
\end{figure}
From Figure~\ref{Figure4} we see that the process is clearly non-stationary from zero to about eight seconds. Between about eight seconds and about fifteen seconds, the stationarity result fluctuates. After about sixteen seconds, the process gradually becomes WSS. This is consistent with the known results from stochastic dynamics \cite{sun2006stochastic}.
\newline
If we apply Eq. (8.14) in \cite{sun2006stochastic}
\begin{equation}
\begin{split}
    \kappa_{xx}(t_1, t_2) &=
\frac{\pi S_0}{2m^2 \zeta \omega_n^3}
\Bigg\{
e^{-\zeta\omega_n|t_2-t_1|}
\left[
\cos\omega_d(t_1-t_2)
+ \frac{\zeta\omega_n}{\omega_d}\sin\omega_d|t_2-t_1|
\right] \\
&-e^{-\zeta\omega_n(t_1+t_2)}
\left[
\frac{\omega_n^2}{\omega_d^2}\cos\omega_d(t_2-t_1)
+ \frac{\zeta\omega_n}{\omega_d}\sin\omega_d(t_1+t_2)
+ \frac{\zeta^2 \omega_n^2}{\omega_d^2}\cos\omega_d(t_1+t_2)
\right]
\Bigg\},
\end{split}
\end{equation}
where $\omega_d = \omega_n \sqrt{1-\zeta^2}$, we see that the term
\begin{equation}
e^{-\zeta\omega_n(t_1+t_2)}
\left[
\frac{\omega_n^2}{\omega_d^2}\cos\omega_d(t_2-t_1)
+ \frac{\zeta\omega_n}{\omega_d}\sin\omega_d(t_1+t_2)
+ \frac{\zeta^2 \omega_n^2}{\omega_d^2}\cos\omega_d(t_1+t_2)
\right]
\end{equation}
is the part that prevents WSS. As time increases, this term decays due to the factor $e^{-\zeta \omega_n (t_1 + t_2)}$ and the response tends to be WSS. Therefore, examining the change of WSS is essentially the same as examining the decay of $e^{-\zeta \omega_n (t_1 + t_2)}$. When
\begin{equation}
    e^{-\zeta \omega_n (t_1 + t_2)}\leq \epsilon,
\end{equation}
we have
\begin{equation}
t_1+t_2 \ge \frac{1}{2\zeta\omega_n}\ln\frac{1}{\epsilon}.
\end{equation}
If we take $\epsilon=0.05$ and substitute the parameters, we obtain
\begin{equation}
    t_1+t_2\geq 5\ln 20\simeq 14.98.
\end{equation}
This is consistent with the result of our test.

\subsection{Duffing Oscillator}
Consider the Duffing stochastic oscillator \cite{belousov2019volterra}
\begin{equation}
    \ddot x + a\dot x + bx + cx^3 = \sigma\,\xi(t),
\end{equation}
where $\xi(t)$ is Gaussian white noise and $\sigma$ is the noise intensity. To test the method, we select four sets of parameters shown in Table~\ref{tab:duffing_stationarity_simplified}.
\begin{table}[htbp]
\centering
\small
\caption{Duffing system parameter sets and stationarity classification.}
\label{tab:duffing_stationarity_simplified}
\renewcommand{\arraystretch}{1.2}
\begin{tabular}{@{} lccccc@{}}
\toprule
\textbf{Case} & $a$ (damping) & $b$ (linear) & $c$ (cubic) & $\sigma$ (noise intensity) & \textbf{Stationary} \\
\midrule
Case 1 Stationary fast        & 0.50 & 1.0  & 1.0  & 0.20 & Yes \\
Case 2 non-stationary without damping & 0.00 & 1.0  & 1.0  & 0.20 & No  \\
Case 3 Stationary double well & 0.50 & $-1.0$ & 1.0  & 0.20 & Yes \\
Case 4 Stationary slow        & 0.05 & 1.0  & 1.0  & 0.20 & Yes \\
\bottomrule
\end{tabular}
\end{table}
Classical results \cite{lin2004probabilistic} give approximate formulas for the covariance of the response, but here our main interest is how stationarity evolves in time.
\begin{figure}[H]
    \centering
    \includegraphics[width=1\linewidth]{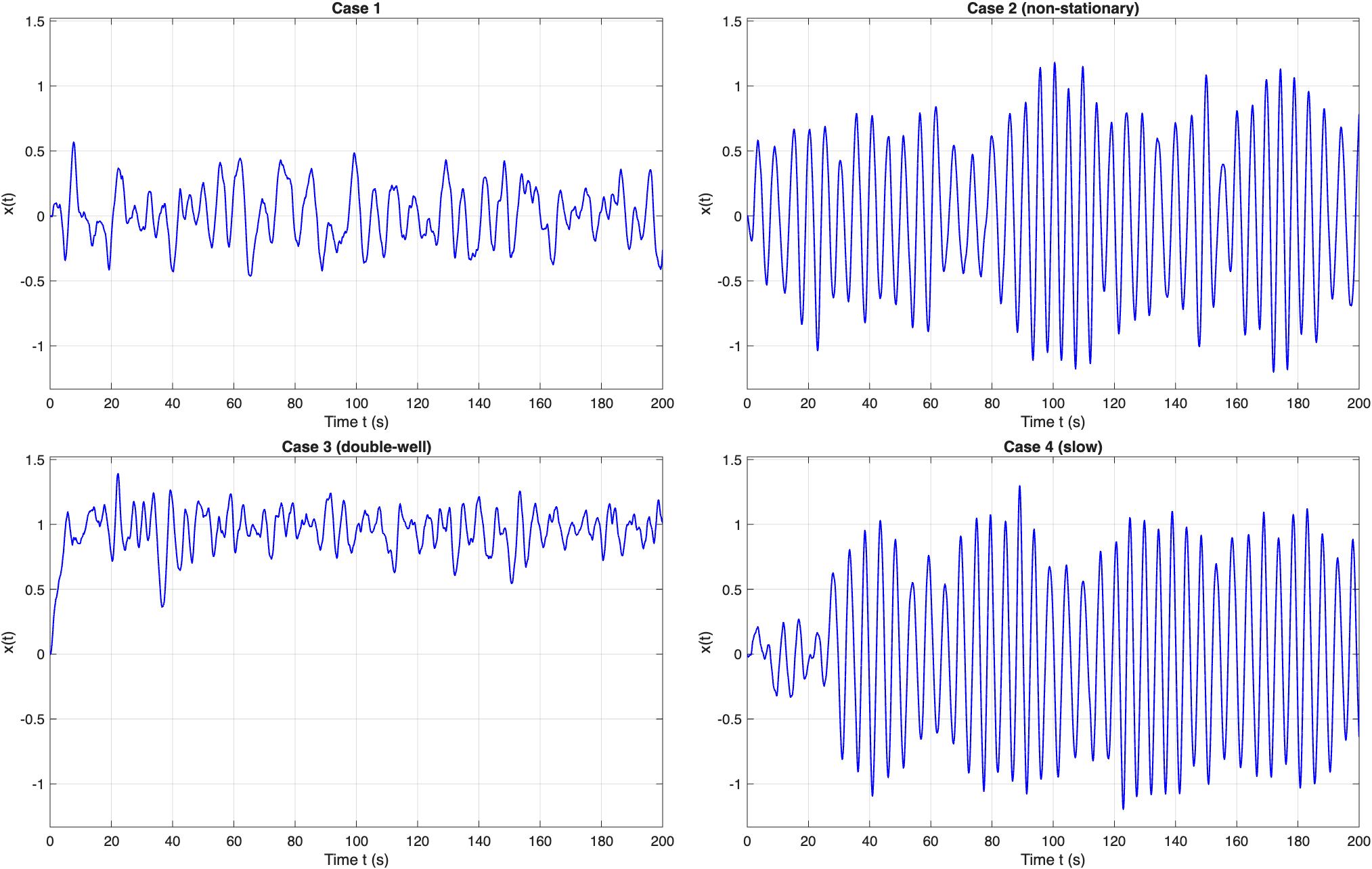}
    \caption{Time series of one trajectory of the Duffing oscillator.}
    \label{duffing_ts}
\end{figure}
As shown in Figure~\ref{duffing_ts}, changing the parameters produces very different responses. We now apply our stationarity test to these four responses.
\begin{figure}[H]
    \centering
    \includegraphics[width=1\linewidth]{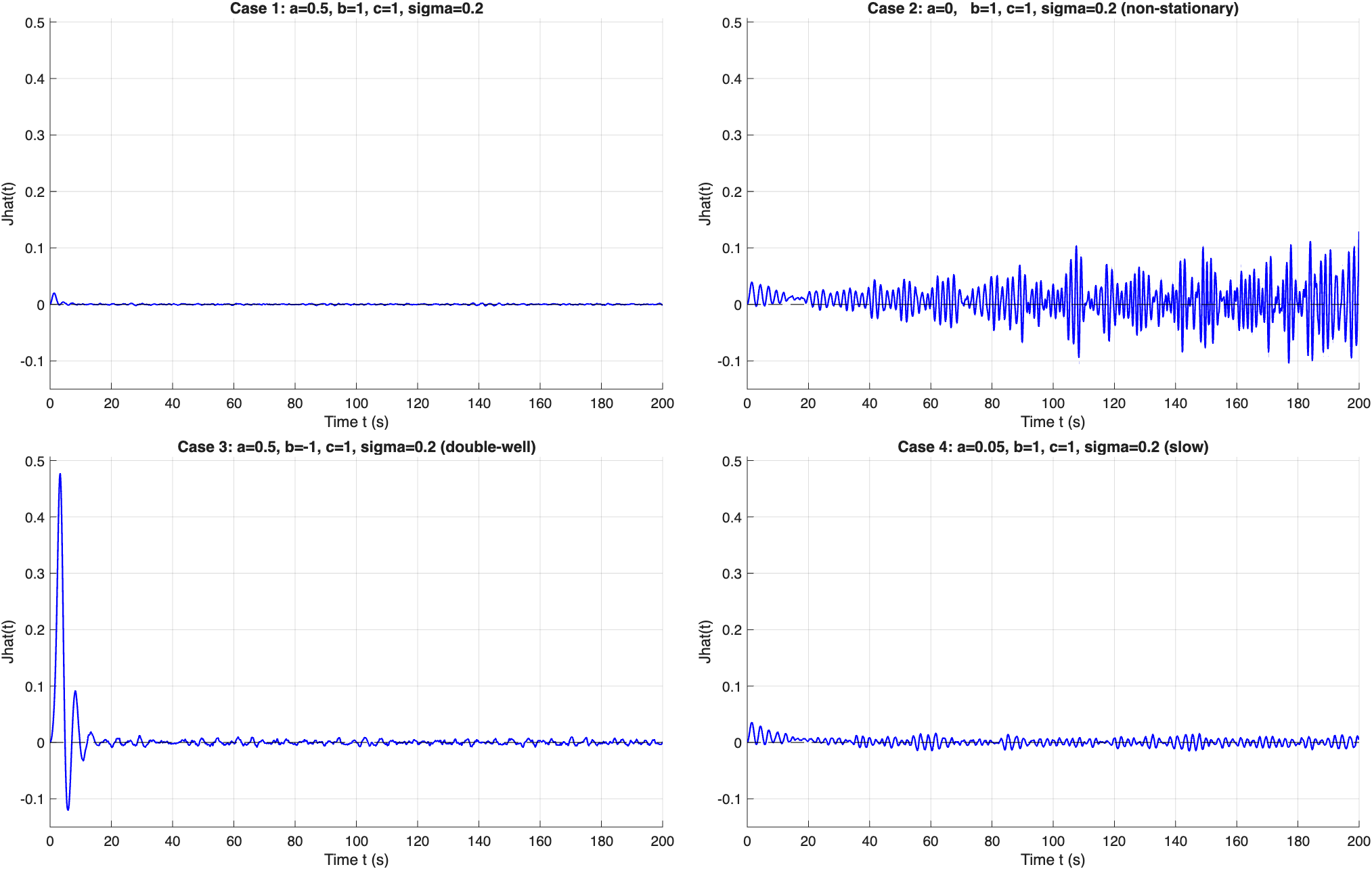}
    \caption{Test results from zero to two hundred seconds.}
    \label{duffing_J}
\end{figure}
Figure~\ref{duffing_J} shows that Case 1 is non-stationary only at the beginning and then becomes stationary very soon. Case 2 does not show stationarity. Case 3 becomes stationary but only after a period with large fluctuations. Case 4 behaves similarly to Case 1 but converges to stationarity more slowly. This can be seen more clearly from Figure~\ref{duffing_J_20}.
\begin{figure}[H]
    \centering
    \includegraphics[width=1\linewidth]{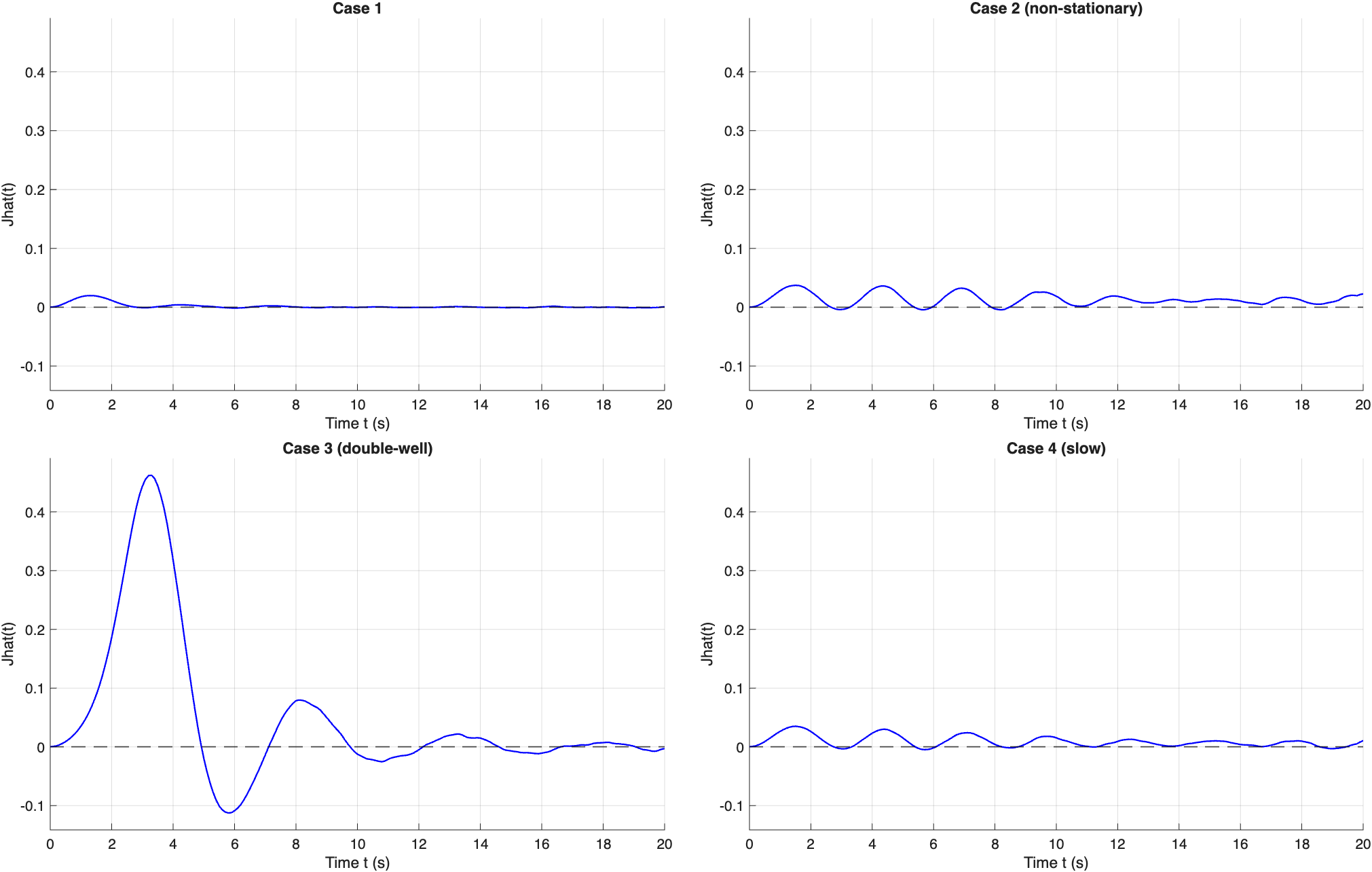}
    \caption{Test results from zero to twenty seconds.}
    \label{duffing_J_20}
\end{figure}
Although the test does not directly show the time evolution of the response, it can be combined with Fokker–Planck theory to describe how the probability density changes in time. When the probability density reaches a steady state, the response becomes strictly stationary and therefore also satisfies WSS.
\newline
For the evolution of the probability density of the Duffing oscillator under Gaussian white noise, there is a rich literature. When $a = 0$, there is no stationary probability density \cite{belousov2019volterra}. This corresponds to Case 2, and the test also reports non-stationarity. Cases 1, 3, and 4 have stationary probability densities, so after a period of time, the responses become stationary. This is in agreement with the result of our test.
\newline
We see that Case 3 has two potential wells, while Case 1 and Case 4 each have only one.
In the single well case, damping removes energy quickly, and the probability density soon collapses to one sharp peak. The response statistics then settle in a short time, and the statistic $\hat{J}(t)$ departs from zero only briefly. The difference between Case 1 and Case 4 comes from the value of a, which changes the rate of convergence \cite{PhysRevE.87.062132}.
\newline
In the double well case (Case 3), the system first stays in one well, so the probability density at early times is unimodal and asymmetric. In the long run, the density should be bimodal and almost symmetric. Before frequent transitions between the two wells occur, the process has not yet reached its stationary distribution \cite{belousov2019volterra}. This leads to a longer transient period in the statistic $\hat{J}(t).$
\subsection{Comparison}
We compare the method proposed in \cite{dette2011measure} with the method proposed in this paper. As shown in Figure~\ref{figdpv}, both methods can detect the evolution of WSS. The difference is that our method does not rely on the assumption of a locally stationary model.
\begin{figure}[H]
    \centering
    \includegraphics[width=1\linewidth]{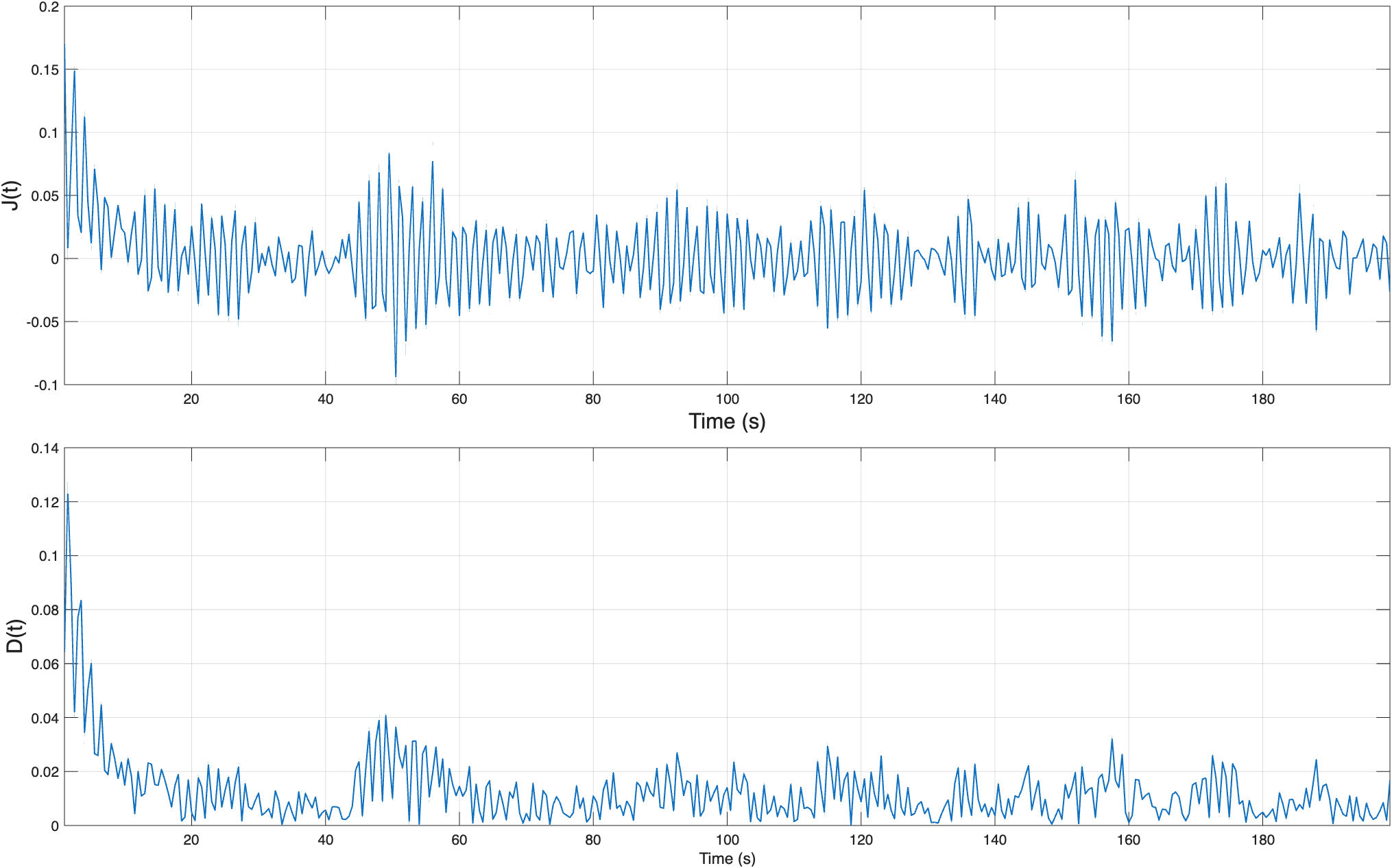}
    \caption{Comparison of  $J(t)=r_s+r_t$ and $D(t)$ for the SDOF oscillator}
    \label{figdpv}
\end{figure}
However, for a Wiener process, the DPV measure is less informative in our setting. As shown in Figure~\ref{figWiener}, J(t) has the theoretical value 1, and its estimate stays close to 1, clearly indicating a persistent departure from WSS, whereas the DPV measure does not provide a comparably clear diagnostic.
\begin{figure}[H]
    \centering
    \includegraphics[width=1\linewidth]{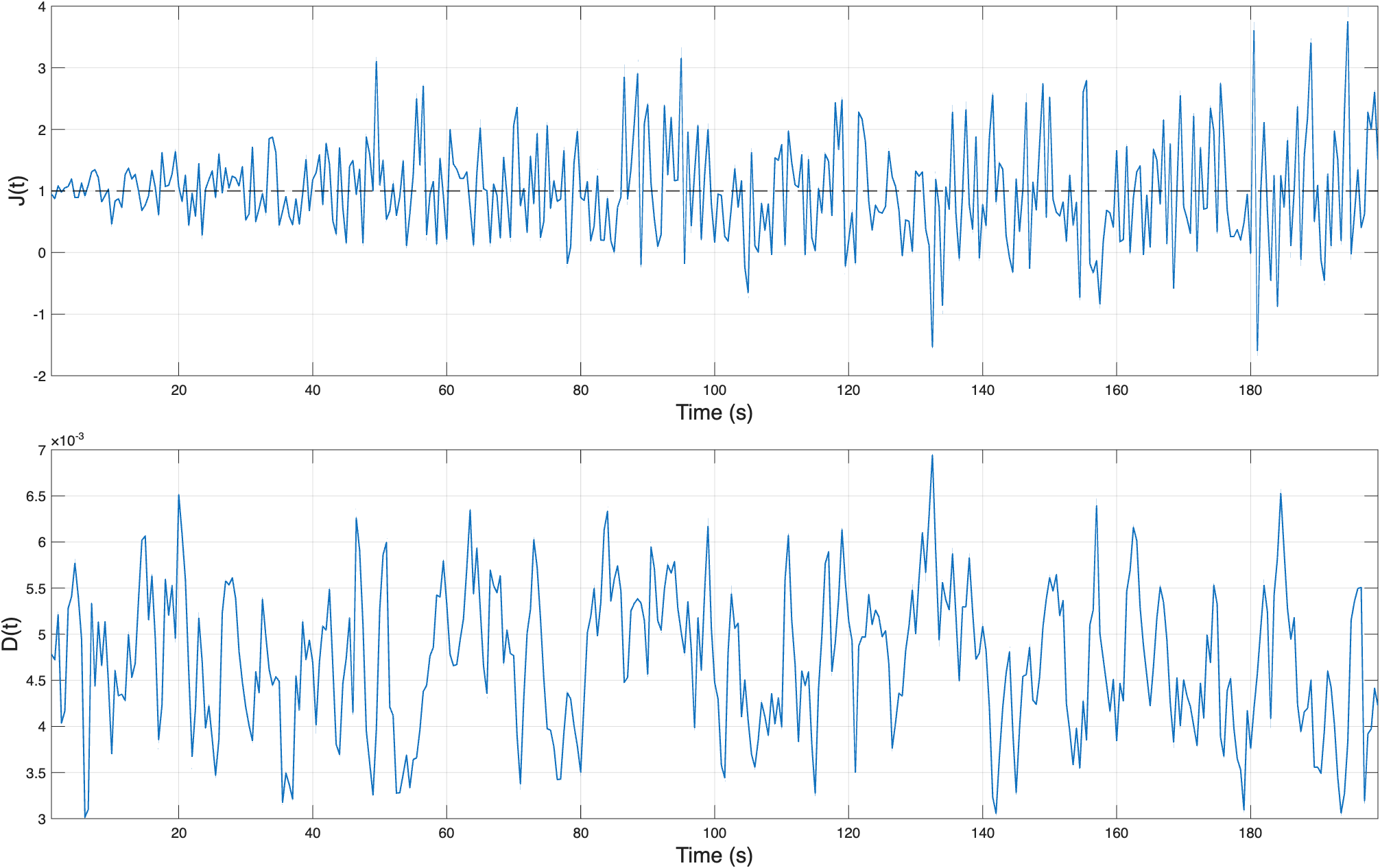}
    \caption{Comparison of  $J(t)=r_s+r_t$ and $D(t)$ for the Wiener process}
    \label{figWiener}
\end{figure}
\section{Conclusion}
In this paper, we propose a geometric method to test WSS directly. We write the classical WSS condition as a first-order condition on the covariance function. Using a local cylindrification together with local polynomial regression, we build an estimator of the derivatives and show that it is consistent and asymptotically normal. The test does not rely on any fixed parametric time series model. It works directly on the covariance surface and gives geometrically interpretable diagnostics for WSS.
\newline
The method also has several limitations. First, the current theory still needs regularity conditions on the covariance function, such as piecewise smoothness and the existence of derivatives up to some finite order. Hence, it does not yet apply to very rough covariance surfaces. Second, since local polynomial regression is performed on the two-dimensional time plane, the algorithm needs a lot of computation and requires a large number of observations. 
\newline
Future research will put more emphasis on spectral representation and data analysis within this framework. For small samples and rough covariance surfaces, it is necessary to develop more robust inference and testing methods to improve finite-sample performance while relaxing smoothness assumptions. For the representation and spectral decomposition of the covariance function, we will consider providing a spectral representation of the covariance function for non-stationary stochastic processes from an evolutionary perspective, enabling better frequency-domain analysis of the responses of stochastic dynamical systems.
\bibliographystyle{apalike} 
\bibliography{sample}

@article{priestley1965evolutionary,
  title={Evolutionary spectra and non-stationary processes},
  author={Priestley, Maurice B},
  journal={Journal of the Royal Statistical Society: Series B (Methodological)},
  volume={27},
  number={2},
  pages={204--229},
  year={1965},
  publisher={Wiley Online Library}
}

@book{fan2018local,
  title={Local polynomial modelling and its applications: monographs on statistics and applied probability 66},
  author={Fan, Jianqing},
  year={2018},
  publisher={Routledge}
}

@article{de2013derivative,
  title={Derivative estimation with local polynomial fitting},
  author={De Brabanter, Kris and De Brabanter, Joseph and Gijbels, Irene and De Moor, Bart},
  journal={Journal of Machine Learning Research},
  volume={14},
  number={1},
  pages={281--301},
  year={2013},
  publisher={MIT Press}
}

@article{masry1996multivariate,
  title={Multivariate local polynomial regression for time series: uniform strong consistency and rates},
  author={Masry, Elias},
  journal={Journal of Time Series Analysis},
  volume={17},
  number={6},
  pages={571--599},
  year={1996},
  publisher={Wiley Online Library}
}

@book{do2016differential,
  title={Differential geometry of curves and surfaces: revised and updated second edition},
  author={Do Carmo, Manfredo P},
  year={2016},
  publisher={Courier Dover Publications}
}

@book{billingsley2013convergence,
  title={Convergence of probability measures},
  author={Billingsley, Patrick},
  year={2013},
  publisher={John Wiley \& Sons}
}

@book{sun2006stochastic,
  title={Stochastic dynamics and control},
  author={Sun, Jian-Qiao},
  volume={4},
  year={2006},
  publisher={Elsevier}
}

@article{page1954continuous,
  title={Continuous inspection schemes},
  author={Page, Ewan S},
  journal={Biometrika},
  volume={41},
  number={1/2},
  pages={100--115},
  year={1954},
  publisher={JSTOR}
}

@article{dahlhaus1997fitting,
  title={Fitting time series models to nonstationary processes},
  author={Dahlhaus, Rainer},
  journal={The annals of Statistics},
  volume={25},
  number={1},
  pages={1--37},
  year={1997},
  publisher={Institute of Mathematical Statistics}
}

@article{dette2011measure,
  title={A measure of stationarity in locally stationary processes with applications to testing},
  author={Dette, Holger and Preu{\ss}, Philip and Vetter, Mathias},
  journal={Journal of the American Statistical Association},
  volume={106},
  number={495},
  pages={1113--1124},
  year={2011},
  publisher={Taylor \& Francis}
}

@article{bebendorf2003note,
  title={A note on the Poincar{\'e} inequality for convex domains},
  author={Bebendorf, Mario},
  journal={Zeitschrift f{\"u}r Analysis und ihre Anwendungen},
  volume={22},
  number={4},
  pages={751--756},
  year={2003}
}

@article{hinkley1971inference,
  title={Inference about the change-point from cumulative sum tests},
  author={Hinkley, David V},
  journal={Biometrika},
  volume={58},
  number={3},
  pages={509--523},
  year={1971},
  publisher={Oxford University Press}
}

@article{bai1998estimating,
  title={Estimating and testing linear models with multiple structural changes},
  author={Bai, Jushan and Perron, Pierre},
  journal={Econometrica},
  pages={47--78},
  year={1998},
  publisher={JSTOR}
}

@article{belousov2019volterra,
  title={Volterra-series approach to stochastic nonlinear dynamics: The Duffing oscillator driven by white noise},
  author={Belousov, Roman and Berger, Florian and Hudspeth, AJ},
  journal={Physical Review E},
  volume={99},
  number={4},
  pages={042204},
  year={2019},
  publisher={APS}
}

@article{PhysRevE.87.062132,
  title = {Stationary energy probability density of oscillators driven by a random external force},
  author = {M\'endez, Vicen\ifmmode \mbox{\c{c}}\else \c{c}\fi{} and Campos, Daniel and Horsthemke, Werner},
  journal = {Phys. Rev. E},
  volume = {87},
  issue = {6},
  pages = {062132},
  numpages = {10},
  year = {2013},
  month = {Jun},
  publisher = {American Physical Society},
  doi = {10.1103/PhysRevE.87.062132},
  url = {https://link.aps.org/doi/10.1103/PhysRevE.87.062132}
}

@book{lin2004probabilistic,
  title = {Probabilistic Structural Dynamics: {{Advanced}} Theory and Applications},
  author = {Lin, Y.K. and Cai, G.Q.},
  year = 2004,
  series = {Engineering Reference Series},
  publisher = {McGraw-Hill},
  isbn = {978-0-07-143800-1},
  lccn = {2004049068}
}

@book{van2000asymptotic,
  title={Asymptotic statistics},
  author={Van der Vaart, Aad W},
  volume={3},
  year={2000},
  publisher={Cambridge University Press}
}

@article{paparoditis2009testing,
     author = {Paparoditis, Efstathios},
     title = {Testing temporal constancy of the spectral structure of a time series},
     journal = {Bernoulli},
     volume = {15},
     number = {1},
     year = {2009},
     pages = { 1190-1221},
     language = {en},
     url = {http://dml.mathdoc.fr/item/1262962232}
}

@article{nason2013test,
  title={A test for second-order stationarity and approximate confidence intervals for localized autocovariances for locally stationary time series},
  author={Nason, Guy},
  journal={Journal of the Royal Statistical Society Series B: Statistical Methodology},
  volume={75},
  number={5},
  pages={879--904},
  year={2013},
  publisher={Oxford University Press}
}

@article{aue2020testing,
  title={Testing for stationarity of functional time series in the frequency domain},
  author={Aue, Alexander and Van Delft, Anne},
  journal={The Annals of Statistics},
  volume={48},
  number={5},
  pages={2505--2547},
  year={2020},
  publisher={JSTOR}
}

@article{van2021nonparametric,
  title={A nonparametric test for stationarity in functional time series},
  author={van Delft, Anne and Characiejus, Vaidotas and Dette, Holger},
  journal={Statistica Sinica},
  volume={31},
  number={3},
  pages={1375--1395},
  year={2021},
  publisher={JSTOR}
}

@article{yao1987approximating,
  title={Approximating the distribution of the maximum likelihood estimate of the change-point in a sequence of independent random variables},
  author={Yao, Yi-Ching},
  journal={The Annals of Statistics},
  pages={1321--1328},
  year={1987},
  publisher={JSTOR}
}

@article{andrews1993tests,
  title={Tests for parameter instability and structural change with unknown change point},
  author={Andrews, Donald WK},
  journal={Econometrica: Journal of the Econometric Society},
  pages={821--856},
  year={1993},
  publisher={JSTOR}
}

@article{einmahl2003empirical,
  title={Empirical likelihood based hypothesis testing},
  author={Einmahl, John HJ and McKeague, Ian W},
  journal={Bernoulli},
  volume={9},
  number={2},
  pages={267--290},
  year={2003},
  publisher={Bernoulli Society for Mathematical Statistics and Probability}
}

@article{casini2024change,
  title={Change-point analysis of time series with evolutionary spectra},
  author={Casini, Alessandro and Perron, Pierre},
  journal={Journal of Econometrics},
  volume={242},
  number={2},
  pages={105811},
  year={2024},
  publisher={Elsevier}
}

@article{last2008detecting,
  title={Detecting abrupt changes in a piecewise locally stationary time series},
  author={Last, Michael and Shumway, Robert},
  journal={Journal of multivariate analysis},
  volume={99},
  number={2},
  pages={191--214},
  year={2008},
  publisher={Elsevier}
}

@article{forgoston2018primer,
  title={A primer on noise-induced transitions in applied dynamical systems},
  author={Forgoston, Eric and Moore, Richard O},
  journal={SIAM Review},
  volume={60},
  number={4},
  pages={969--1009},
  year={2018},
  publisher={SIAM}
}

@book{brockwell2009time,
  title={Time series: theory and methods},
  author={Brockwell, Peter J and Davis, Richard A},
  year={2009},
  publisher={Springer science \& business media}
}

@book{liang2015random,
  title={Random vibration: mechanical, structural, and earthquake engineering applications},
  author={Liang, Zach and Lee, George C},
  year={2015},
  publisher={CRC Press}
}

@book{leoni2017first,
  title={A first course in Sobolev spaces},
  author={Leoni, Giovanni},
  year={2017},
  publisher={American Mathematical Soc.}
}

\end{document}